\author{Marcelo Brutti Righi\footnote{We would like to thank two anonymous reviewers for their comments, which have helped to improve our manuscript. We are also grateful for the financial support of CNPq (Brazilian Research Council) and FAPERGS (Rio Grande do Sul State Research Council).}\\\textit{Federal University of Rio Grande do Sul}\\marcelo.righi@ufrgs.br}
\title{A composition between risk and deviation measures}
\date{}
\newtheorem{Def}{Definition}[section]
\newtheorem{The}[Def]{Theorem}
\newtheorem{Pro}[Def]{Proposition}
\theoremstyle{definition}
\theoremstyle{remark}
\newtheorem{Rem}[Def]{Remark}
\begin{document}
	
	\maketitle
	\begin{abstract}
		The intuition of risk is based on two main concepts: loss and variability. In this paper, we present a composition of risk and deviation measures, which contemplate these two concepts. Based on the proposed Limitedness axiom, we prove that this resulting composition, based on properties of the two components, is a coherent risk measure. Similar results for the cases of convex and co-monotone risk measures are exposed. We also provide examples of known and new risk measures constructed under this framework in order to highlight the importance of our approach, especially the role of the Limitedness axiom. 
	\end{abstract}
	\smallskip
	\noindent \textbf{Keywords}: Coherent risk measures, Generalized deviation measures, Convex risk measures, Co-monotone coherent risk measures, Limitedness.

	\section{Introduction}
	The intuition of risk is based on two main concepts: the possibility of a negative outcome, i.e. a loss; and the variability in terms of an expected result, i.e. a deviation. Since the time when the modern theory of finance was accepted, the role of risk measurement has attracted attention. Initially, it was predominantly used as a dispersion measure, such as variance, which contemplates the second pillar of the intuition. More recently, the occurrence of critical events has turned the attention to tail-risk measurement, as is the case of well-known Value at Risk (VaR) and Expected Shortfall (ES) measures, which contemplate the first pillar. Theoretical and mathematical discussions have gained attention in the literature, giving importance to distinct axiomatic structures for classes of risk measures and their properties. See \cite{Follmer2015} for a recent review. Despite their fundamental importance, such classes present a very wide range for those risk measures that can be understood as valid or useful. Thus, they can be considered as a first step, in which measures with poor theoretical properties are discarded. The next step would be to consider, inside a class, those measures more suited to practical use. Thus, to ensure a more complete measurement, it is reasonable to consider contemplating both pillars of intuition on risk. These pillars include the possibility of negative results and variability over an expected result, as a single measure. 
	
	Some authors have proposed and studied specific examples of risk measures of this kind. \cite{Ogryczak1999} analyzed properties from the mean plus semi-deviation. \cite{Fischer2003} and \cite{Chen2008} considered combining the mean and semi-deviations at different powers to form a coherent risk measure. \cite{Furman2006} proposed a measure that weights the mean and standard deviation in the truncated tail by VaR. \cite{Krokhmal2007} and \cite{Dentcheva2010} extended the ES concept, obtained as the solution to an optimization problem, for cases with higher moments and establish a relationship to deviation measures. \cite{Righi2016} considered penalizing the ES by the dispersion of results that represent losses exceeding the ES. \cite{Furman2017} and \cite{Berkhouch2017} penalize ES by the dispersion of tail-based Gini measures. These risk measures are individual examples, rather than a general approach. The difficulty in combining both concepts arises from the loss of an individual components theoretical properties, especially the fundamental Monotonicity axiom. This property guarantees that positions with the worst outcomes have larger values for risk measures. As an example, this axiom is not respected by the very intuitive mean plus standard deviation measure, despite the very good characteristics and intuitive separate meaning of both the mean and standard deviation. 
	
	Seeking to address this deficiency, our objective in this paper is to combine risk and deviation measures in a general fashion with formulation $\rho+\mathcal{D}$ in order to maintain desired theoretical properties, which are central to the theory of risk measures. In our main context, $\rho$ is a coherent risk measure in the sense of \cite{Artzner1999}, whereas $\mathcal{D}$ is a generalized deviation measure, as proposed by \cite{Rockafellar2006}. The financial interpretation of $\rho+\mathcal{D}$ is the same as any coherent risk measure, but it serves as a more conservative protection once it yields higher values due o the penalty resulting from dispersion, while keeping the desired properties. Nevertheless, instead of making this protection arbitrary, our approach contemplates a deviation term and leads to desired theoretical properties. 
	
	We prove a useful result that relates Limitedness; an axiom we propose of the form $\rho(X)\leq-\inf(X)$, with Monotonicity and Lower Range Dominance. The milestone is that, in these cases, we always obtain $\mathcal{D}(X)\leq-\rho(X)-\inf X$, i.e. the dispersion term considers 'financial information' from the interval between the loss represented by $\rho$ and the maximum loss $-\inf X=\sup-X$. Thus, we can state that this combination is again a coherent risk measure. Under Translation Invariance, one can think in $\rho(X)+\mathcal{D}(X)$ as $\rho(X')$, where $X'=X-\mathcal{D}(X)$, i.e. a real valued penalization on the initial position $X$. Moreover, this can be extended to acceptance sets, which are composed by positions $X$ with non-positive risk, of the form  $\mathcal{A}_{\rho+\mathcal{D}}:=\{X : \rho(X)+\mathcal{D}(X)\leq 0\}=\{X : \rho(X)\leq -\mathcal{D}(X)\}$. In this sense, it is possible to explicitly observe the penalization reasoning in terms of the deviation term. A position must have risk, in terms of the loss measure $\rho$, at most of $-\mathcal{D}(X)\leq 0$ in order to be acceptable -- an even more restrictive criteria. It is valid to note, however, that although $X'=X-\mathcal{D}(X)$ works as a penalization, $\mathcal{A}_{\rho+\mathcal{D}}$ is not an acceptance set without Limitedness for $\rho+\mathcal{D}$ because Monotonicity plays a key role. 
	
	Moreover, we also discuss issues regarding Law Invariance and representations introduced in \cite{Kusuoka2001}. Our results can be extended to the case of convex measures in the sense of \cite{Follmer2002}, \cite{Frittelli2002} and \cite{Pflug2006}, or co-monotone coherent measures, as for the spectral or distortion classes proposed by \cite{Acerbi2002a} and \cite{Grechuk2009}. We also provide some examples of known and new proposed functionals composed by risk and deviation measures in order to illustrate our results, especially the role of our Limitedness axiom. In these examples, it is possible to generate the deviation term from a chosen risk measure, which eases the financial meaning. It is valid to point out that, for practical matters, both $\rho$ and $\mathcal{D}$ will be in the same monetary unit, but our results are valid even if this is not the case. Moreover, we are concerned over how to make a composition between risk and deviation measures rather than to claim it as new classes of risk measures. We highlight that, beyond the specific examples we expose, any combination of risk and deviation measures leading to Limitedness can be taken into consideration under the results we present in this paper. Moreover, our approach is static and univariate, which is standard in risk measures theory. Extensions to dynamic and multivariate cases are beyond our scope. Furthermore, extensions to a robust framework induced by uncertainty on models, for risk forecasting as in \cite{Righi2015}, linked to probability measures are also beyond the present scope.  
	
	We contribute to existing literature because, to the best of our knowledge, no such result as that proposed by us has been considered in previous studies. \cite{Rockafellar2006} presented an interplay between coherent risk measures and generalized deviation measures, and \cite{Rockafellar2013} proposed a risk quadrangle, where this relationship is extended by adding intersections with concepts of error and regret under a generator statistic. In fact, these authors prove that any given  generalized deviation $\mathcal{D}$ with $\mathcal{D}\leq E[X]-\inf X$, one can obtain the coherent risk measure $E[-X]+\mathcal{D}(X)$. However, these studies are centered on an interplay of concepts, rather than a combination that joins both pillars of the intuition on risk, since their formulation is only valid, in our notation, for the case $\rho(X)=E[-X]$. \cite{Filipovic2007} presented results in which convex functions possess Monotonicity and Translation Invariance, both of which are convex risk measures. Nonetheless, their result is based on the supremum of functions on a vector space, and not on a relation of axioms for risk measures such as in our approach.  Furthermore, we also present and prove results about some new examples of risk measures that rely on our approach.
	
	The reader should notice that the goal is to compose a new functional from risk and deviation terms, instead of decomposing a given functional into these two components. The key point is to simultaneously consider both concepts (risk and deviation) in a single functional and to guarantee the presence of theoretical properties. The approach for actuarial science, a sum of expectation plus a risk loading, does not necessarily guarantee the theoretical properties, as is the case for mean plus standard deviation, for instance. Other possibilities beyond the linear sum of risk and deviation, such as a risk measure with more risk aversion to mean-preserving spreads, may be understood as another measurement of the risk term and do not explicit the dispersion term and do not guarantee theoretical properties. It is more a dominance stochastic approach, which is related to probability distributions. 
	
	The remainder of this paper is structured as follows: Section \ref{sec:prel} presents the notation, definitions and preliminaries from the literature; Section \ref{sec:main} contains our main results regarding the proposed composition of risk and deviation measures under the Limitedness axiom; Section \ref{sec:exa} exposes examples and results of known and new proposed compositions in order to illustrate our approach, especially the role for Limitedness axiom; and Section \ref{sec:conc} summarizes and concludes the paper.

	\section{Preliminaries}\label{sec:prel}
	
	Unless otherwise stated, the content is based on the following notation. Consider the random result $X$ of any asset ($X\geq0$ is a gain, $X<0$ is a loss) that is defined in an atom-less probability space $(\Omega,\mathcal{F},\mathbb{P})$. In addition, $\mathcal{P}=\{\mathbb{Q} : \mathbb{Q}\ll\mathbb{P}\}$ is the non-empty set of probability measures $\mathbb{Q}$ defined in $(\Omega,\mathcal{F})$, which are absolutely continuous in relation to $\mathbb{P}$. We have that $\frac{d\mathbb{Q}}{d\mathbb{P}}$ is the density of $\mathbb{Q}$ relative to $\mathbb{P}$, which is known as the Radon-Nikodym derivative. $\mathcal{P}_{(0,1]}$ is the set of probability measures defined in $(0,1]$. All equalities and inequalities are considered to be almost surely in $\mathbb{P}$. $E_{\mathbb{P}}[X]$ is the expected value of $X$ under $\mathbb{P}$. $F_{X}$ is the probability function of $X$ and its inverse is $F_{X}^{-1}$, defined as $F_X^{-1}(\alpha)=\inf\{x:F_X(x)\geq\alpha\}$. We define $X^+=\max(X,0)$ and $X^-=\max(-X,0)$. Let $L^p=L^p(\Omega,\mathcal{F},\mathbb{P})$, with $1\leq p\leq \infty$, be the space of equivalence classes of random variables defined by the norm $\lVert X\rVert_{p}=(E_\mathbb{P} [|X|^p])^{\frac{1}{p}}$ with finite $p$ and $\lVert X\rVert_{\infty}=\inf\{k : |X|\leq k\}$. $X\in L^{p}$ indicates that $\lVert X\rVert_{p} < \infty$. We have that $L^{q}, \frac{1}{p}+\frac{1}{q}=1$, is the dual space of $L^{p}$. 
	
	In this section, we present some definitions and results from the literature that serve as a background to our main results. In this sense, we begin by defining the axioms for risk and deviation measures. There is a large number of possible properties. We focus on those that are most prominent in the literature and that are used in this paper. Each class of risk measures is based on a specific set of axioms. We also define the classes of risk measures that are representative in this paper.
	
	\begin{Def}
		A functional $\rho : L^{p}\rightarrow \mathbb{R}\cup\{\infty\}$ is a risk measure, which may fulfill the following properties:
		
		\begin{itemize}
			\item Monotonicity: if $X \leq Y$, then $\rho(X) \geq \rho(Y),\forall X,Y\in L^{p}$.
			\item Translation Invariance: $\rho(X+C)=\rho(X)-C,\forall X\in L^{p}, C \in \mathbb{R}$.
			\item Sub-additivity: $\rho(X+Y)\leq \rho(X)+\rho(Y),\forall X,Y\in L^{p}$.
			\item Positive Homogeneity: $\rho(\lambda X)=\lambda \rho(X),\forall X\in L^{p},\lambda \geq 0$.  
			\item Convexity: $\rho(\lambda X+(1-\lambda)Y)\leq \lambda \rho(X)+(1-\lambda)\rho(Y),\forall X,Y\in L^{p},0 \leq\lambda \leq 1$.
			\item Fatou Continuity: if $\arrowvert X_{n}\arrowvert\leq Y, \{X_{n}\}_{n=1}^{\infty},Y\in L^{p}$, and $X_{n}\rightarrow X$, then $\rho(X) \leq \liminf \rho( X_{n})$. 
			\item Law Invariance: if $F_{X}=F_{Y}$, then $\rho(X)=\rho(Y),\forall X,Y\in L^{p}$. 
			\item Co-monotonic Additivity: $\rho(X+Y)= \rho(X)+\rho(Y),\forall X,Y\in L^{p}$ with $X,Y$ co-monotone, i.e. $\left( X(w)-X(w^{'})\right)\left( Y(w)-Y(w^{'}) \right)\geq0,\:\forall w,w^{'}\in\Omega$. 
			\item Limitedness: $\rho(X)\leq -\inf X=\sup -X,\forall X\in L^{p}$.
		\end{itemize}
	\end{Def}
	
	\begin{Rem}
		Monotonicity requires that, if one position generates worse results to another, its risk shall be greater. Translation Invariance ensures that, if a certain gain is added to a position, its risk shall decrease by the same amount. Risk measures that satisfy both Monotonicity and Translation Invariance are called monetary and are Lipschitz continuous in $L^\infty$. Sub-additivity, which is based on the principle of diversification, implies that the risk of a combined position is less than the sum of individual risks. Positive Homogeneity is related to the position size, i.e. the risk proportionally increases with position size. These two axioms are together known as sub-linearity. Convexity is a well-known property of functions that can be understood as a relaxed version of sub-linearity. Any two axioms among Positive Homogeneity, Sub-Additiviy and Convexity implies the third one. Fatou continuity is a well-established property for functions, directly linked to lower semi-continuity and continuity from above. Law invariance ensures that two positions with the same probability function have equal risks. Co-monotonic Additivity is an extreme case where there is no diversification, because the positions have perfect positive association. Co-monotonic Additivity implies Positive Homogeneity. Limitedness ensures that the risk of a position is never greater than the maximum loss. In this paper, we are always working with normalized risk measures in the sense of $\rho(0)=0$, since this is easily obtained through a translation. 
	\end{Rem}
	
	\begin{Def}
		A functional $\mathcal{D} : L^{p}\rightarrow\mathbb{R}_{+}\cup\{\infty\}$ is a deviation measure, which may fulfill the following properties:
		\begin{itemize}
			\item Non-Negativity: For all $X\in L^p$, $\mathcal{D}(X)=0$ for constant $X$, and $\mathcal{D}(X)>0$ for
			non-constant X.
			\item Translation Insensitivity: $\mathcal{D}(X+C)=\mathcal{D}(X),\forall X\in L^{p}, C \in \mathbb{R}$.
			\item Sub-additivity: $\mathcal{D}(X+Y)\leq \mathcal{D}(X)+\mathcal{D}(Y),\forall X,Y\in L^{p}$.
			\item Positive Homogeneity: $\mathcal{D}(\lambda X)=\lambda \mathcal{D}(X),\forall X\in L^{p},\lambda \geq 0$.
			\item Lower Range Dominance: $\mathcal{D}(X)\leq E_\mathbb{P}[X]-\inf X,\forall X\in L^{p}$.
			\item Fatou Continuity: if $\arrowvert X_{n}\arrowvert\leq Y, \{X_{n}\}_{n=1}^{\infty},Y\in L^{p}$, and $X_{n}\rightarrow X$, then $\mathcal{D}(X) \leq \liminf \mathcal{D}( X_{n})$. 
			\item Law Invariance: if $F_{X}=F_{Y}$, then $\mathcal{D}(X)=\mathcal{D}(Y),\forall X,Y\in L^{p}$. 
			\item Co-monotonic Additivity: $\mathcal{D}(X+Y)= \mathcal{D}(X)+\mathcal{D}(Y),\forall X,Y\in L^{p}$ with $X,Y$ co-monotone. 
		\end{itemize}
	\end{Def}
	
	\begin{Rem}
		Non-negativity assures that there is dispersion only for non-constant positions. Translation Insensitivity indicates that the deviation does not change if a constant value is added. Lower Range Dominance restricts the measure to a range that is lower than the range between expectation and the minimum value. These axioms are related to the concept of norm, as explored in \cite{Righi2017}. 
	\end{Rem}
	
	\begin{Def}
		Let $\rho : L^{p}\rightarrow\mathbb{R}\cup\{\infty\}$ and $\mathcal{D} : L^{p}\rightarrow\mathbb{R}_{+}\cup\{\infty\}$.
		
		\begin{enumerate}
			\item $\rho$ is a coherent risk measure in the sense of \cite{Artzner1999} if it fulfills the axioms of Monotonicity, Translation Invariance, Sub-additivity, and Positive Homogeneity. 
			
			\item $\rho$ is a convex risk measure in the sense of \cite{Follmer2002} and \cite{Frittelli2002} if it fulfills the axioms of Monotonicity, Translation Invariance, and Convexity. 
			
			\item $\mathcal{D}$ is a generalized deviation measure in the sense of \cite{Rockafellar2006} if it fulfills the axioms of Non-negativity, Translation Insensitivity, Sub-additivity, and Positive Homogeneity.
			
			\item $\mathcal{D}$ is a convex deviation measure in the sense of \cite{Pflug2006} if it fulfills the axioms of Non-negativity, Translation Insensitivity, and Convexity.
			
			\item A risk or deviation measure is said to be law invariant, lower-range dominated, limited, co-monotone, or Fatou continuous if it fulfills the axioms of Law Invariance, Lower Range Dominance, Limitedness, Co-monotonic Additivity, or Fatou Continuity, respectively.
		\end{enumerate}
	\end{Def}
	
	\begin{Rem}
		Given a coherent risk measure $\rho$, it is possible to define an acceptance set of positions that do not have positive risk as $\mathcal{A}_{\rho}=\{X\in L^{p} : \rho(X)\leq 0\}$ . Let $L^{p}_{+}$ be the cone of the non-negative elements of $L^{p}$ and $L^{p}_{-}$ its negative counterpart. This acceptance set contains $L^{p}_{+}$, has no intersection with $L^{p}_{-}$, and is a convex cone. The risk measure associated with this set is $\rho(X)=\inf\{m : X+m\in \mathcal{A}_{\rho}\}$, i.e. the minimum capital that needs to be added to $X$ to ensure it becomes acceptable. For convex risk measures, $\mathcal{A}_{\rho}$ need not be a cone.
	\end{Rem}
	
	A coherent risk measure can be represented as the worst possible expectation from scenarios generated by probability measures $\mathbb{Q}\in\mathcal{P}$, known as dual sets. \cite{Artzner1999} presented this result for finite $L^{\infty}$ spaces. \cite{Delbaen2002} generalized for all $L^{\infty}$ spaces, whereas \cite{Inoue2003} considered the spaces $L^{p}, 1\leq p\leq\infty$. \cite{Follmer2002}, \cite{Frittelli2002} and \cite{Kaina2009} presented a similar result for convex risk measures based on a penalty function. It is also possible to represent generalized deviation measures in a similar approach, with the due adjustments, as demonstrated by \cite{Rockafellar2006} and \cite{Grechuk2009}. \cite{Ang2018} adapted this framework for coherent risk measures. \cite{Pflug2006} proved similar results for convex deviation measures also based on a penalty function. In this sense, the dual representations we consider in this paper are formally guaranteed by the following results. 
	
	\begin{The}\label{the:dua}
		Let $\rho : L^{p}\rightarrow \mathbb{R}\cup\{\infty\}$ and $\mathcal{D} : L^{p}\rightarrow\mathbb{R}_{+}\cup\{\infty\}$. Then:
		\begin{enumerate}
			\item $\rho$ is a Fatou continuous coherent risk measure if, and only if, it can be represented as $\rho(X)=\sup\limits_{\mathbb{Q}\in\mathcal{P}_{\rho}} E_{\mathbb{Q}} [-X]$, where $\mathcal{P}_{\rho}=\{\mathbb{Q}\in\mathcal{P} : \frac{d\mathbb{Q}}{d\mathbb{P}}\in L^{q}, \rho(X)\geq E_{\mathbb{Q}}[-X],\forall X\in L^{p}\}$ is a closed and convex dual set.
			\item  $\rho$ is a Fatou continuous convex risk measure if, and only if, it can be represented as $\rho(X)=\sup\limits_{\mathbb{Q}\in\mathcal{P}} \{ E_{\mathbb{Q}} [-X]-\gamma_{\rho}(\mathbb{Q})\}$, where $\gamma_{\rho} : \mathcal{P}\rightarrow\mathbb{R}\cup\{\infty\}$ is a lower semi-continuous convex penalty function conform $\gamma_{\rho}(\mathbb{Q})=\sup\limits_{X\in\mathcal{A}_{\rho}} E_{\mathbb{Q}} [-X]$, with $\gamma_{\rho}(\mathbb{Q})\geq-\rho(0)$. 
			\item $\mathcal{D}$ is a lower-range dominated Fatou continuous generalized deviation measure if, and only if, it can be represented as $\mathcal{D}(X)=E_{\mathbb{P}} [X]-\inf\limits_{\mathbb{Q\in\mathcal{P}_{\mathcal{D}}}}E_{\mathbb{Q}} [X]$, where $\mathcal{P}_{\mathcal{D}}=\{\mathbb{Q}\in\mathcal{P} : \frac{d\mathbb{Q}}{d\mathbb{P}}\in L^{q},\mathcal{D}(X)\geq E_{\mathbb{P}}[X]-E_{\mathbb{Q}}[X],\forall X\in L^{p}\}$ is a closed and convex dual set. 
			\item $\mathcal{D}$ is a lower-range dominated Fatou continuous convex deviation measure if, and only if, it can be represented as $\mathcal{D}(X)=E_{\mathbb{P}} [X]-\inf\limits_{\mathbb{Q\in\mathcal{P}}}\{ E_{\mathbb{Q}} [X]+\gamma_{\mathcal{D}}(\mathbb{Q})\}$, where $\gamma_{\mathcal{D}}$ is similar to $\gamma_{\rho}$.  
		\end{enumerate}
	\end{The}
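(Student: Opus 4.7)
The plan is to apply Fenchel--Moreau duality on $L^p$ for parts (1) and (2), and then reduce parts (3) and (4) to the risk-measure case via the Rockafellar--Uryasev--Zabarankin bijection $\rho(X):=E_{\mathbb{P}}[-X]+\mathcal{D}(X)$ between lower range dominated deviations and coherent (resp. convex) risk measures dominating $E_{\mathbb{P}}[-\cdot]$.

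For part (1), the first step is to upgrade Fatou continuity to lower semi-continuity of $\rho$ with respect to the dual pair $(L^p,L^q)$. For $1\leq p<\infty$ this is Inoue's result and for $p=\infty$ it follows from the Delbaen/Krein--\v{S}mulian argument. The Fenchel--Moreau theorem then yields
\[
\rho(X)=\sup_{Y\in L^q}\{E_{\mathbb{P}}[XY]-\rho^{*}(Y)\},\qquad \rho^{*}(Y)=\sup_{X\in L^p}\{E_{\mathbb{P}}[XY]-\rho(X)\}.
\]
Positive Homogeneity forces $\rho^{*}$ to take only the values $0$ and $\infty$, hence $\rho^{*}$ is the indicator of a closed convex cone in $L^q$. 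Monotonicity confines $Y\leq 0$ on the effective domain, while Translation Invariance together with $\rho(0)=0$ enforces $E_{\mathbb{P}}[Y]=-1$, so writing $d\mathbb{Q}/d\mathbb{P}=-Y$ identifies this cone with the set $\mathcal{P}_{\rho}$ of the statement, which is automatically closed and convex. Part (2) follows the same recipe but without Positive Homogeneity, so $\rho^{*}$ remains a proper convex lower semi-continuous function, which is the penalty $\gamma_{\rho}$. The equivalent formula $\gamma_{\rho}(\mathbb{Q})=\sup_{X\in\mathcal{A}_{\rho}}E_{\mathbb{Q}}[-X]$ comes from the characterization $\rho(X)=\inf\{m:X+m\in\mathcal{A}_{\rho}\}$ (translation invariance), and the bound $\gamma_{\rho}(\mathbb{Q})\geq-\rho(0)$ from evaluating $\rho^{*}$ at $X=0$.

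For parts (3) and (4), I would set $\rho(X):=E_{\mathbb{P}}[-X]+\mathcal{D}(X)$ and verify that $\rho$ inherits Translation Invariance, Sub-additivity, Positive Homogeneity (resp. Convexity), and Fatou Continuity from $\mathcal{D}$ together with linearity of the expectation. The delicate axiom is Monotonicity: if $X\leq Y$, Sub-additivity gives $\mathcal{D}(Y)-\mathcal{D}(X)\leq\mathcal{D}(Y-X)$, and Lower Range Dominance yields $\mathcal{D}(Y-X)\leq E_{\mathbb{P}}[Y-X]$ because $\inf(Y-X)\geq 0$, so $\rho(Y)\leq\rho(X)$. Applying part (1) (resp.~(2)) to this $\rho$ and then subtracting $E_{\mathbb{P}}[-X]$ delivers the claimed representation $\mathcal{D}(X)=E_{\mathbb{P}}[X]-\inf_{\mathbb{Q}\in\mathcal{P}_{\rho}}E_{\mathbb{Q}}[X]$, and the set $\mathcal{P}_{\mathcal{D}}$ coincides with $\mathcal{P}_{\rho}$ after rewriting $\rho(X)\geq E_{\mathbb{Q}}[-X]$ as $\mathcal{D}(X)\geq E_{\mathbb{P}}[X]-E_{\mathbb{Q}}[X]$; the convex case is analogous with $\gamma_{\mathcal{D}}=\gamma_{\rho}$.

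The main obstacle is the upgrade from Fatou continuity, stated in terms of dominated a.s. convergence, to the weak lower semi-continuity needed to invoke Fenchel--Moreau on the pair $(L^p,L^q)$. This is the technical core of $L^p$ dual representation theory, and is where the richness of the probability space and the choice of dual pair are genuinely used. Once that step is taken, the remainder is a fairly mechanical translation of each axiom into a property of the corresponding dual object, and the reduction of the deviation case to the risk-measure case through the RUZ bijection is routine.
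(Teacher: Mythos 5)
The paper offers no proof of this theorem: it is stated explicitly as a compilation of known results, with the four parts attributed respectively to \cite{Artzner1999}, \cite{Delbaen2002} and \cite{Inoue2003} for the coherent case, \cite{Follmer2002}, \cite{Frittelli2002} and \cite{Kaina2009} for the convex case, and \cite{Rockafellar2006}, \cite{Grechuk2009} and \cite{Pflug2006} for the deviation cases. Your sketch is a faithful reconstruction of exactly the arguments those references use, so there is no divergence of method to report: Fenchel--Moreau on the pair $(L^p,L^q)$ after upgrading the Fatou property to weak lower semi-continuity (Krein--\v{S}mulian for $p=\infty$, automatic norm-continuity of finite convex functions for $p<\infty$), the translation of Positive Homogeneity, Monotonicity and Translation Invariance into ``$\rho^{*}$ is an indicator'', ``$Y\leq 0$'' and ``$E_{\mathbb{P}}[Y]=-1$'', and the reduction of the deviation statements to the risk-measure statements through $\rho=E_{\mathbb{P}}[-\,\cdot\,]+\mathcal{D}$, whose Monotonicity you correctly derive from Sub-additivity plus Lower Range Dominance. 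Two points are left implicit and deserve a sentence each if you were to write this out: the ``if'' directions (a supremum of the maps $X\mapsto E_{\mathbb{Q}}[-X]$ with $d\mathbb{Q}/d\mathbb{P}\in L^{q}$ is coherent and inherits the Fatou property by dominated convergence under each $\mathbb{Q}$), and, in parts (3)--(4), the strict-positivity half of Non-Negativity ($\mathcal{D}(X)>0$ for non-constant $X$), which does not follow from the representation alone unless the dual set is rich enough to separate non-constant positions from $\mathbb{P}$ --- a caveat already present in the theorem as stated in the paper, so it is not a defect of your argument specifically.
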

	
	\section{Main results}\label{sec:main}

	We now turn the focus to our main contribution, the proposed approach for combination of risk and deviation measures. We initially prove interesting results that relate Monotonicity and Lower Range Dominance axioms to Limitedness. Based on these and the previously exposed results, we are able to prove our main theorem. The results can be extended to the convex and co-monotone coherent cases. 
	
	\begin{Pro}\label{pro:lim}
		Let $\rho : L^{p}\rightarrow\mathbb{R}\cup\{\infty\}$ and $\mathcal{D} : L^{p}\rightarrow\mathbb{R}_{+}\cup\{\infty\}$.Then:
		
		\begin{enumerate}
			\item If $\rho$ fulfills Sub-additivity (Convexity) and Limitedness, then it possesses Monotonicity. 
			\item If $\rho$ fulfills Translation Invariance and Monotonicity, then it possesses Limitedness.
			\item if $\rho+\mathcal{D}$ is a coherent (convex) risk measure, then $\mathcal{D}$ possesses Lower Range Dominance.  
		\end{enumerate}
		
	\end{Pro}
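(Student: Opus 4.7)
The plan is to handle the three assertions sequentially, with Limitedness driving each argument.

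For part (1), take $X \leq Y$; the goal is to show $\rho(Y) \leq \rho(X)$. Since $Y - X \geq 0$ we have $\inf(Y-X) \geq 0$, so Limitedness yields $\rho(Y-X) \leq -\inf(Y-X) \leq 0$. Under Sub-additivity this immediately gives $\rho(Y) = \rho(X + (Y-X)) \leq \rho(X) + \rho(Y-X) \leq \rho(X)$, which is Monotonicity. Under Convexity I introduce a perturbation: for each $\lambda \in (0,1)$, set $Z_\lambda := X + (Y-X)/(1-\lambda)$, so that $Y = \lambda X + (1-\lambda) Z_\lambda$ is a convex combination. Because $Z_\lambda \geq X$ pointwise, $\inf Z_\lambda \geq \inf X$, and Limitedness delivers the uniform bound $\rho(Z_\lambda) \leq -\inf X$. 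Convexity then yields $\rho(Y) \leq \lambda \rho(X) + (1-\lambda)(-\inf X)$, and letting $\lambda \to 1^-$ concludes $\rho(Y) \leq \rho(X)$.

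For part (2), the pointwise comparison $X \geq \inf X$ combined with Monotonicity gives $\rho(X) \leq \rho(\inf X)$, and Translation Invariance together with the normalization $\rho(0) = 0$ yields $\rho(\inf X) = -\inf X$, which is exactly Limitedness.

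For part (3), set $\pi := \rho + \mathcal{D}$. Coherence (respectively convexity) of $\pi$ supplies Monotonicity and Translation Invariance, so by part (2) $\pi$ satisfies Limitedness: $\pi(X) \leq -\inf X$, equivalently $\mathcal{D}(X) \leq -\inf X - \rho(X)$. To promote this to Lower Range Dominance it remains to argue $\rho(X) \geq -E_\mathbb{P}[X]$; for a coherent (convex) risk measure this is standard via the dual representation of Theorem \ref{the:dua}, taking $\mathbb{P}$ as a feasible scenario in the dual set. Combining the two bounds produces $\mathcal{D}(X) \leq E_\mathbb{P}[X] - \inf X$, as required. The main obstacle is the convex case of (1), where the perturbation trick demands a uniform Limitedness bound on $\rho(Z_\lambda)$ as $\lambda \to 1^-$, and (3), which hinges on extracting the expectation bound $\rho \geq -E_\mathbb{P}[\cdot]$ from the dual representation rather than from the axioms alone.
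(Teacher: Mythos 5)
Your proof is correct and follows essentially the same route as the paper's: the decomposition $Y=X+Z$ with $Z\geq 0$ for the sub-additive case, the convex combination $Y=\lambda X+(1-\lambda)Z_\lambda$ with $\lambda\to 1$ for the convex case, the bound $\rho(X)\leq\rho(\inf X)=-\inf X$ for part (2), and Limitedness of $\rho+\mathcal{D}$ (obtained from part (2)) combined with $\rho(X)\geq E_{\mathbb{P}}[-X]$ from the dual representation for part (3). If anything, your treatment of the convex case is slightly more careful than the paper's, which silently discards the term $(1-\lambda)\rho(Z_\lambda)$ as if it were non-positive, whereas you bound it uniformly by $(1-\lambda)(-\inf X)$ and pass to the limit $\lambda\to 1^-$.
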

	\begin{proof}
		For \textit{(i)}, remember that because $L^p$ spaces are composed by equivalence classes of random variables, we have that if $X = Y$, then $\rho(X)= \rho(Y)$. We begin by supposing the Sub-additivity of $\rho$. Let $X,Y\in L^{p},X\leq Y$. There is $Z\in L^{p},Z\geq0$ such as $Y=X+Z$. From Limitedness, we must have $\rho(Z)\leq -\inf Z\leq0$. Thus, by Sub-additivity we obtain $\rho(Y)=\rho(X+Z)\leq\rho(X)+\rho(Z)\leq\rho(X)$, as required. By the same logic, let $\rho$ have Convexity. Thus, for any $\lambda\in(0,1)$ there is some $Z\in L^{p},\:Z\geq0$ so that we have $Y=\lambda X+(1-\lambda)Z$. This leads to $\rho(Y)=\rho(\lambda X+(1-\lambda)Z)\leq\lambda\rho(X)+(1-\lambda)\rho(Z)\leq\lambda\rho(X)$. As $\lambda$ is an arbitrary value in $(0,1)$, we can make it as close to $1$ as we want and obtain $\rho(Y)\leq\rho(X)$, as desired. 
		
		For \textit{(ii)}, note that because $X\geq\inf X$, Monotonicity and Translation Invariance implies $\rho(X)\leq\rho(\inf X)=-\inf X$, which is Limitedness. 
		
		For \textit{(iii)}, note that for a coherent (convex) risk measure $\rho$, due to its dual representation, we have that $E_{\mathbb{P}} [-X]\leq\rho(X)\leq\sup -X=-\inf X$ with extreme situations where $\mathcal{P}_{\rho}$ equals a singleton $\{\mathbb{P}\}$ or the whole $\mathcal{P}$. Thus, if $\rho+\mathcal{D}$ is coherent (convex), hence limited, then $\mathcal{D}$ is lower-range dominated because $\mathcal{D}(X)\leq -\rho(X)-\inf X\leq E_{\mathbb{P}} [X]-\inf X$. This concludes the proof.
	\end{proof}
	
	\begin{Rem}
		As proved by \cite{Bauerle2006}, in the presence of Law Invariance, Convexity and Monotonicity are equivalent to second-order stochastic dominance for atom-less spaces. As Limitedness implies Monotonicity, in the presence of Convexity and Law Invariance, it also implies second-order stochastic dominance.
	\end{Rem}
	
	\begin{The}\label{the:main}
		Let $\rho : L^{p}\rightarrow \mathbb{R}\cup\{\infty\}$ be a coherent risk measure and $\mathcal{D} : L^{p}\rightarrow \mathbb{R}_{+}\cup\{\infty\}$ a generalized deviation measure. Then:
		
		\begin{enumerate}
			\item $\rho+\mathcal{D}$ is a coherent risk measure if and only if it fulfills Limitedness. 
			\item $\rho$ and $\mathcal{D}$ are Fatou continuous and $\rho+\mathcal{D}$ is limited if, and only if, $\rho+\mathcal{D}$ can be represented as $\rho(X)+\mathcal{D}(X)=\sup\limits_{\mathbb{Q}\in\mathcal{P}_{\rho+\mathcal{D}}} E_{\mathbb{Q}} [-X]$, where $\mathcal{P}_{\rho+\mathcal{D}}=\{\mathbb{Q}\in\mathcal{P} : \frac{d\mathbb{Q}}{d\mathbb{P}}=\frac{d\mathbb{Q}_{\rho}}{d\mathbb{P}}+\frac{d\mathbb{Q}_{\mathcal{D}}}{d\mathbb{P}}-1,\mathbb{Q}_{\rho}\in\mathcal{P}_{\rho},\mathbb{Q}_{\mathcal{D}}\in\mathcal{P}_{\mathcal{D}} \}$.
			\item $\rho$ and $\mathcal{D}$ are law invariant and $\rho+\mathcal{D}$ is limited if, and only if, $\rho+\mathcal{D}$ can be represented as $\rho(X)+\mathcal{D}(X)=\sup\limits_{m\in\mathcal{M}}\int_{0}^{1}\rho^{\alpha}(X)md(\alpha)$, where $\rho^{\alpha}(X)=-\frac{1}{\alpha}\int_{0}^{\alpha}F_{X}^{-1}(u)du$ and $\mathcal{M}=\{m\in\mathcal{P}_{(0,1]} : \int_{(0,1]} \frac{1}{\alpha}dm(\alpha)=\frac{d\mathbb{Q}}{d\mathbb{P}},\mathbb{Q}\in\mathcal{P}_{\rho+\mathcal{D}}\}$.
		\end{enumerate}	
	\end{The}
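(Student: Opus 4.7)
The plan is to treat the three parts in order, leveraging Proposition~\ref{pro:lim} and Theorem~\ref{the:dua} at each step. Part (i) is essentially axiom bookkeeping; part (ii) combines the two dual representations and uses Limitedness to certify a valid Radon--Nikodym derivative; part (iii) layers Kusuoka's spectral representation on top of~(ii).

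For part (i), the ``only if'' direction is immediate, since any coherent $\rho+\mathcal{D}$ satisfies Monotonicity and Translation Invariance, and Proposition~\ref{pro:lim}(ii) then delivers Limitedness. For the converse I would check the four coherence axioms of $\rho+\mathcal{D}$ directly. Sub-additivity and Positive Homogeneity follow by adding the corresponding properties of $\rho$ and $\mathcal{D}$; Translation Invariance comes from combining Translation Invariance of $\rho$ with Translation Insensitivity of $\mathcal{D}$; Monotonicity is the only nontrivial axiom and is produced by Proposition~\ref{pro:lim}(i) using Sub-additivity together with the assumed Limitedness.

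For part (ii), assuming Fatou continuity of each component and Limitedness of the sum, I would invoke Theorem~\ref{the:dua}(1) and~(3). Rewriting $\mathcal{D}(X)=-E_{\mathbb{P}}[-X]+\sup_{\mathbb{Q}_{\mathcal{D}}\in\mathcal{P}_{\mathcal{D}}}E_{\mathbb{Q}_{\mathcal{D}}}[-X]$ and combining the two independent suprema yields
\[
\rho(X)+\mathcal{D}(X)=\sup_{\mathbb{Q}_{\rho}\in\mathcal{P}_{\rho},\,\mathbb{Q}_{\mathcal{D}}\in\mathcal{P}_{\mathcal{D}}}E_{\mathbb{P}}\!\left[(-X)\!\left(\tfrac{d\mathbb{Q}_{\rho}}{d\mathbb{P}}+\tfrac{d\mathbb{Q}_{\mathcal{D}}}{d\mathbb{P}}-1\right)\right].
\]
The combined density integrates to one by construction, so the critical step is to argue that this supremum coincides with one restricted to pairs whose combined density is non-negative, so that the associated $\mathbb{Q}$ actually lies in $\mathcal{P}$ and matches the dual set produced by Theorem~\ref{the:dua}(1) applied to $\rho+\mathcal{D}$, which is coherent by part (i). The converse direction is easier: a sup over $\mathbb{Q}\in\mathcal{P}$ automatically yields a Fatou continuous coherent (hence limited, by Proposition~\ref{pro:lim}(ii)) functional, and the explicit structure of $\mathcal{P}_{\rho+\mathcal{D}}$ exposes closed convex dual sets for $\rho$ and $\mathcal{D}$ separately, returning their Fatou continuity through Theorem~\ref{the:dua}.

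For part (iii), Law Invariance of each component transfers additively to $\rho+\mathcal{D}$, which is coherent by part (i) under Limitedness. I would then apply Kusuoka's representation for law-invariant Fatou continuous coherent risk measures on an atomless space to obtain $\rho(X)+\mathcal{D}(X)=\sup_{m\in\mathcal{M}'}\int_{0}^{1}\rho^{\alpha}(X)\,dm(\alpha)$ for some $\mathcal{M}'\subset\mathcal{P}_{(0,1]}$, and identify $\mathcal{M}'=\mathcal{M}$ by matching against the dual representation of part~(ii) through the standard bijection $m\leftrightarrow\int_{(0,1]}\alpha^{-1}\,dm(\alpha)=d\mathbb{Q}/d\mathbb{P}$ between spectral measures and non-increasing rearrangements of densities in $\mathcal{P}_{\rho+\mathcal{D}}$. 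The hardest step throughout is the non-negativity argument in part (ii); once it is in hand, (iii) reduces to applying Kusuoka and performing a change of variables, and the converse of~(iii) follows from the observation that a Kusuoka mixture is automatically law-invariant coherent, hence limited.
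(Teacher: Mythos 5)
Your overall route coincides with the paper's: part (i) by axiom bookkeeping plus Proposition \ref{pro:lim}, part (ii) by merging the two dual representations of Theorem \ref{the:dua} into a single supremum over combined densities $\frac{d\mathbb{Q}_{\rho}}{d\mathbb{P}}+\frac{d\mathbb{Q}_{\mathcal{D}}}{d\mathbb{P}}-1$, and part (iii) by layering Kusuoka's representation on top of (ii) via the change of variables $\frac{d\mathbb{Q}}{d\mathbb{P}}=\int_{(u,1]}\alpha^{-1}\,dm(\alpha)$. However, you explicitly leave unproven the step you yourself identify as critical: that the supremum over all pairs $(\mathbb{Q}_{\rho},\mathbb{Q}_{\mathcal{D}})$ equals a supremum over genuine probability measures, i.e.\ that the combined densities can be taken non-negative. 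Announcing this as ``the hardest step'' without an argument is a genuine gap, because non-negativity is \emph{not} automatic: if $\frac{d\mathbb{Q}_{\rho}}{d\mathbb{P}}$ and $\frac{d\mathbb{Q}_{\mathcal{D}}}{d\mathbb{P}}$ both vanish on a common set of positive measure, the combined density equals $-1$ there while still integrating to one. The missing argument is where Limitedness must enter: if some combined density $h$ were negative on a set $A$ with $\mathbb{P}(A)>0$, testing against $X=\lambda\mathbf{1}_{A}$ gives $E_{\mathbb{P}}[-Xh]=-\lambda\int_{A}h\,d\mathbb{P}\to\infty$ while $-\inf X=0$, contradicting $\rho(X)+\mathcal{D}(X)\leq-\inf X$. (For what it is worth, the paper's own justification at this point --- that a negative density would force $E_{\mathbb{P}}[h]<0$ --- does not hold either; the indicator-test argument is what actually closes the step.)

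Two smaller omissions: in part (iii) Fatou continuity is not among the hypotheses, so you cannot invoke Kusuoka's theorem for ``law-invariant Fatou continuous'' measures without first noting (as the paper does, citing Jouini et al.\ and Svindland) that law-invariant convex risk measures on atomless spaces are automatically Fatou continuous; and in the converse of (iii) you conclude that the Kusuoka mixture is law-invariant and limited but do not address why this forces $\rho$ and $\mathcal{D}$ \emph{individually} to be law invariant, which is part of the claimed equivalence.
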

	\begin{proof}
		We begin with \textit{(i)}. According to Proposition \ref{pro:lim}, if $\rho+\mathcal{D}$ is a coherent risk measure then it fulfills Limitedness. For the converse part, the Translation Invariance, Sub-additivity, and Positive Homogeneity of $\rho+\mathcal{D}$ is a consequence of the individual axioms fulfilled by $\rho$ and $\mathcal{D}$ individually by definition. As there is Limitedness by assumption, $\rho+\mathcal{D}$ respects Monotonicity due to Proposition \ref{pro:lim}. Hence, it is a coherent risk measure.
		
		For \textit{(ii)}, $\rho+\mathcal{D}$ being limited implies it is a coherent risk measure, by the previous result. As $\rho$ and $\mathcal{D}$ are Fatou continuous, by Theorem \ref{the:dua} they have representations with dual sets $\mathcal{P}_{\rho}$ and $\mathcal{P}_{\mathcal{D}}$. Thus, $\rho+\mathcal{D}$ is also Fatou continuous and has dual representation. We then obtain that:
		\begin{align*}
			\rho(X)+\mathcal{D}(X) &=\sup\limits_{\mathbb{Q}_{\rho}\in\mathcal{P}_{\rho}} E_{\mathbb{Q}_{\rho}} [-X]+E_{\mathbb{P}} [X]-\inf\limits_{\mathbb{Q}_{\mathcal{D}}\in\mathcal{P}_{\mathcal{D}}}E_{\mathbb{Q}_{\mathcal{D}}} [X]\\
			&=\sup\limits_{\mathbb{Q}_{\rho}\in\mathcal{P}_{\rho},\mathbb{Q}_{\mathcal{D}}\in\mathcal{P}_{\mathcal{D}}} \{E_{\mathbb{Q}_{\rho}} [-X]-E_{\mathbb{P}} [-X]+E_{\mathbb{Q}_{\mathcal{D}}} [-X]\}\\
			&=\sup\limits_{\mathbb{Q}_{\rho}\in\mathcal{P}_{\rho},\mathbb{Q}_{\mathcal{D}}\in\mathcal{P}_{\mathcal{D}}} \left\lbrace E_{\mathbb{P}} \left[ -X\left( \frac{d\mathbb{Q}_{\rho}}{d\mathbb{P}}+\frac{d\mathbb{Q}_{\mathcal{D}}}{d\mathbb{P}}-1\right) \right] \right\rbrace \\
			&=\sup\limits_{\mathbb{Q}\in\mathcal{P}_{\rho+\mathcal{D}}} E_{\mathbb{Q}} [-X],
		\end{align*}
		where $\mathcal{P}_{\rho+\mathcal{D}}=\{\mathbb{Q}\in\mathcal{P} : \frac{d\mathbb{Q}}{d\mathbb{P}}=\frac{d\mathbb{Q}_{\rho}}{d\mathbb{P}}+\frac{d\mathbb{Q}_{\mathcal{D}}}{d\mathbb{P}}-1,\mathbb{Q}_{\rho}\in\mathcal{P}_{\rho},\mathbb{Q}_{\mathcal{D}}\in\mathcal{P}_{\mathcal{D}} \}$. To show that $\mathcal{P}_{\rho+\mathcal{D}}$ is composed by valid probability measures, we verify that for $\mathbb{Q}\in\mathcal{P}_{\rho+\mathcal{D}}$, $E_{\mathbb{P}}\left[\frac{d\mathbb{Q}}{d\mathbb{P}} \right] =E_{\mathbb{P}}\left[\frac{d\mathbb{Q}_{\rho}}{d\mathbb{P}} \right]+E_{\mathbb{P}}\left[\frac{d\mathbb{Q}_{\mathcal{D}}}{d\mathbb{P}} \right]-E_{\mathbb{P}}\left[1 \right]=1$. In addition, Monotonicity of $\rho+\mathcal{D}$ implies that $\frac{d\mathbb{Q}}{d\mathbb{P}}\geq0$, because for $X,Y\in L^p$ with $X\leq Y$ we have from the dual representation that $\sup\limits_{\mathbb{Q}\in\mathcal{P}_{\rho+\mathcal{D}}} E_{\mathbb{P}} [-X\frac{d\mathbb{Q}}{d\mathbb{P}}]\geq\sup\limits_{\mathbb{Q}\in\mathcal{P}_{\rho+\mathcal{D}}} E_{\mathbb{P}} [-Y\frac{d\mathbb{Q}}{d\mathbb{P}}]$. This inequality could not be guaranteed if $\frac{d\mathbb{Q}}{d\mathbb{P}}$ assume negative values. Now, we assume that $\rho+\mathcal{D}$ has such dual representation. Then $\rho+\mathcal{D}$ is a Fatou continuous coherent risk measure that respects Limitedness. Reversing the deduction steps, one recovers the individual dual representations of both $\rho$ and $\mathcal{D}$. By Theorem \ref{the:dua} these two measures possess Fatou continuity.
		
		Regarding \textit{(iii)}, \cite{Kusuoka2001} showed that coherent risk measures that fulfill Law Invariance and Fatou continuity axioms can be represented as $\sup\limits_{m\in\mathcal{M}}\int_{0}^{1}\rho^{\alpha}(X)md(\alpha)$ for some $\mathcal{M}\subset\mathcal{P}_{(0,1]}$. Moreover, \cite{Jouini2006} and \cite{Svindland2010} have proved that law-invariant convex risk measures defined in atom-less spaces are Fatou continuous. Since we are considering an atom-less probability space, we get that $\rho+\mathcal{D}$ can have this kind of representation because it is limited, then coherent. We can define a continuous variable $u\sim\mathbb{U}(0,1)$ uniformly distributed between 0 and 1, such that $F_{X}^{-1}(u)=X$. For $\mathbb{Q}\in\mathcal{P}_{\rho+\mathcal{D}}$, we can obtain $\frac{d\mathbb{Q}}{d\mathbb{P}}=H(u)=\int_{(u,1]}\frac{1}{\alpha}dm(\alpha)$, where $H$ is a monotonically decreasing function and $m\in\mathcal{P}_{(0,1]}$. As $H$ is anti-monotonic in relation to X, one can reach the supremum in a dual representation. Then we obtain 
		\begin{align*}
			\rho(X)+\mathcal{D}(X) &=\sup\limits_{\mathbb{Q}\in\mathcal{P}_{\rho+\mathcal{D}}} E_{\mathbb{Q}} [-X]\\
			&=\sup\limits_{\mathbb{Q}\in\mathcal{P}_{\rho+\mathcal{D}}} E_{\mathbb{P}} \left[ -X\frac{d\mathbb{Q}}{d\mathbb{P}}\right] \\
			&=\sup\limits_{m\in\mathcal{M}} \left\lbrace\int_{0}^{1}-F_{X}^{-1}(u) \left[ \int_{(u,1]}\frac{1}{\alpha}dm(\alpha)\right] du\right\rbrace  \\
			&=\sup\limits_{m\in\mathcal{M}} \left\lbrace\int_{(0,1]} \left[\frac{1}{\alpha} \int_{0}^{\alpha}-F_{X}^{-1}(u)du\right]dm(\alpha)\right\rbrace  \\
			&=\sup\limits_{m\in\mathcal{M}} \left\lbrace\int_{(0,1]} \rho^{\alpha}dm(\alpha)\right\rbrace,
		\end{align*}
		where $\mathcal{M}=\left\lbrace m\in\mathcal{P}_{(0,1]} : \int_{(u,1]}\frac{1}{\alpha}dm(\alpha)=\frac{d\mathbb{Q}}{d\mathbb{P}},\mathbb{Q}\in\mathcal{P}_{\rho+\mathcal{D}} \right\rbrace$. We now assume that $\rho+\mathcal{D}$ has such representation. Then it is a law-invariant coherent risk measure. This is only possible if both $\rho$ and $\mathcal{D}$ are law invariant. By \textit{(i)}, it is also limited. This concludes the proof.
	\end{proof}
	
	Assertions of Theorem \ref{the:main} can be extended to the case where $\rho$ is a convex risk measure and $\mathcal{D}$ a convex deviation measure. For the law invariant case, \cite{Frittelli2005} and \cite{Noyan2015} proved representations similar to those of \cite{Kusuoka2001} for convex risk measures. The results of Theorem \ref{the:main} can also be extended to the case where $\rho$ and $\mathcal{D}$ are co-monotone. In this scenario, $\mathcal{M}$ becomes a singleton, as is the case with the spectral risk measures proposed by \cite{Acerbi2002a} and concave distortion functions, which are widely used in insurance. \cite{Grechuk2009}, \cite{Wang2017} and \cite{Furman2017} prove results linking these classes and axioms for generalized deviation measures. We state these two extensions without proof, because the deductions are quite similar to the coherent case.
	
	\begin{The}\label{the:con}
		Let $\rho : L^{p}\rightarrow \mathbb{R}\cup\{\infty\}$ be a convex risk measure and $\mathcal{D} : L^{p}\rightarrow \mathbb{R}_{+}\cup\{\infty\}$ a convex deviation measure. Then:
		
		\begin{enumerate}
			\item $\rho+\mathcal{D}$ is a convex risk measure if and only if it fulfills Limitedness. 
			\item $\rho$ and $\mathcal{D}$ are Fatou continuous and $\rho+\mathcal{D}$ is limited if, and only if, $\rho+\mathcal{D}$ can be represented as $\rho(X)+\mathcal{D}(X)=\sup\limits_{\mathbb{Q}\in\mathcal{P}} \{ E_{\mathbb{Q}} [-X]-\gamma_{\rho+\mathcal{D}}(\mathbb{Q})\}$, where $\gamma_{\rho+\mathcal{D}}=\gamma_{\rho}+\gamma_{\mathcal{D}}$.
			\item $\rho$ and $\mathcal{D}$ are law invariant and $\rho+\mathcal{D}$ is limited if, and only if, $\rho+\mathcal{D}$ can be represented as $\rho(X)+\mathcal{D}(X)=\sup\limits_{m\in\mathcal{P}(0,1]}\left\lbrace \int_{0}^{1}\rho^{\alpha}(X)dm(\alpha)-\gamma_{\rho+\mathcal{D}}(m)\right\rbrace $. 
		\end{enumerate}	
	\end{The}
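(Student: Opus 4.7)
The plan is to mirror the three-step proof of Theorem \ref{the:main}, adapting each step from the coherent/generalized deviation framework to the convex one. Part \textit{(i)} transfers almost verbatim because Proposition \ref{pro:lim} was already stated with the convex variant built in. Parts \textit{(ii)} and \textit{(iii)} proceed by plugging in the penalized dual representations of Theorem \ref{the:dua}\textit{(ii)} and \textit{(iv)} in place of the coherent ones used in Theorem \ref{the:main}, and then pushing the same Radon--Nikodym manipulation through with the penalty terms carried along.

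For part \textit{(i)}, observe that $\rho+\mathcal{D}$ is automatically Translation Invariant, since $\rho$ is Translation Invariant and $\mathcal{D}$ is Translation Insensitive, and is Convex as a sum of convex functionals. Under Limitedness, the convex case of Proposition \ref{pro:lim}\textit{(i)} upgrades these properties with Monotonicity, yielding a convex risk measure. Conversely, any convex risk measure is Translation Invariant and Monotone, hence Limited by Proposition \ref{pro:lim}\textit{(ii)}.

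For part \textit{(ii)}, I would begin with the representations $\rho(X)=\sup_{\mathbb{Q}}\{E_\mathbb{Q}[-X]-\gamma_\rho(\mathbb{Q})\}$ and $\mathcal{D}(X)=E_\mathbb{P}[X]-\inf_{\mathbb{Q}}\{E_\mathbb{Q}[X]+\gamma_\mathcal{D}(\mathbb{Q})\}$ from Theorem \ref{the:dua}. Rewriting the infimum as a supremum through the identity $-\inf\{\cdot\}=\sup\{-\cdot\}$ and collapsing the two independent suprema into a single one over pairs $(\mathbb{Q}_\rho,\mathbb{Q}_\mathcal{D})$, the term $E_\mathbb{P}[X]$ can be absorbed via the density identity $\frac{d\mathbb{Q}}{d\mathbb{P}}=\frac{d\mathbb{Q}_\rho}{d\mathbb{P}}+\frac{d\mathbb{Q}_\mathcal{D}}{d\mathbb{P}}-1$ exactly as in the coherent case of Theorem \ref{the:main}\textit{(ii)}. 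Reindexing by the combined density $\mathbb{Q}$ converts the sup over pairs into a sup over $\mathbb{Q}$ with penalty $\gamma_{\rho+\mathcal{D}}(\mathbb{Q})=\gamma_\rho(\mathbb{Q}_\rho)+\gamma_\mathcal{D}(\mathbb{Q}_\mathcal{D})$ attached to the pair realizing that density (this is what is meant by $\gamma_{\rho+\mathcal{D}}=\gamma_\rho+\gamma_\mathcal{D}$, up to an inf-convolution when several pairs reproduce the same $\mathbb{Q}$). Verification that $\mathbb{Q}\in\mathcal{P}$, namely $E_\mathbb{P}[\frac{d\mathbb{Q}}{d\mathbb{P}}]=1$ and $\frac{d\mathbb{Q}}{d\mathbb{P}}\geq 0$, copies the coherent argument verbatim, and the converse direction is obtained by reversing these manipulations and invoking Theorem \ref{the:dua}.

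For part \textit{(iii)}, the main obstacle is securing a Kusuoka-type representation in the convex setting. Here I would appeal to \cite{Frittelli2005}, who extend \cite{Kusuoka2001} to law-invariant convex risk measures, together with the fact used in the proof of Theorem \ref{the:main}\textit{(iii)} that law invariance plus convexity on atom-less spaces automatically implies Fatou continuity via \cite{Jouini2006} and \cite{Svindland2010}. Once this representation is available, the computation proceeds identically to that in Theorem \ref{the:main}\textit{(iii)}: parametrize $\frac{d\mathbb{Q}}{d\mathbb{P}}=H(u)=\int_{(u,1]}\frac{1}{\alpha}dm(\alpha)$, interchange the order of integration to produce the $\rho^\alpha$ kernel, and transport the penalty $\gamma_{\rho+\mathcal{D}}(m)$ through every line unchanged. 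The converse follows because any functional admitting such a representation is a law-invariant convex risk measure, forcing both $\rho$ and $\mathcal{D}$ to be law invariant, and hence Limited by part \textit{(i)}.
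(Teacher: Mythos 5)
The paper states this theorem without proof, explicitly deferring to the coherent case, and your writeup carries out exactly that deferral: part \textit{(i)} via Proposition \ref{pro:lim} verbatim, and parts \textit{(ii)}--\textit{(iii)} by rerunning the density manipulation of Theorem \ref{the:main} with the penalty functions of Theorem \ref{the:dua}\textit{(ii)} and \textit{(iv)} (resp.\ the \cite{Frittelli2005} representation) carried along, which is the intended argument. Your remark that $\gamma_{\rho+\mathcal{D}}=\gamma_{\rho}+\gamma_{\mathcal{D}}$ must be read as an inf-convolution over pairs $(\mathbb{Q}_{\rho},\mathbb{Q}_{\mathcal{D}})$ producing the same combined density (extended by $+\infty$ off the reachable set) is a correct sharpening of the statement; the only step worth making explicit is that the Lower Range Dominance of $\mathcal{D}$ needed to invoke Theorem \ref{the:dua}\textit{(iv)} is itself supplied by Limitedness of $\rho+\mathcal{D}$ through Proposition \ref{pro:lim}\textit{(iii)}.
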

	
	\begin{The}\label{the:com}
		Let $\rho : L^{p}\rightarrow \mathbb{R}\cup\{\infty\}$ be a co-monotone coherent risk measure and $\mathcal{D} : L^{p}\rightarrow \mathbb{R}_{+}\cup\{\infty\}$ a co-monotone generalized deviation measure. Then:
		
		\begin{enumerate}
			\item $\rho+\mathcal{D}$ is a co-monotone coherent risk measure if, and only if, it fulfills Limitedness. 
			\item $\rho$ and $\mathcal{D}$ are Fatou continuous and $\rho+\mathcal{D}$ is limited if, and only if, $\rho+\mathcal{D}$ can be represented as $\rho(X)+\mathcal{D}(X)=\sup\limits_{\mathbb{Q}\in\mathcal{P}_{\rho+\mathcal{D}}} E_{\mathbb{Q}} [-X]$.
			\item $\rho$ and $\mathcal{D}$ are law invariant and $\rho+\mathcal{D}$ is limited if, and only if, $\rho+\mathcal{D}$ can be represented as $\rho(X)+\mathcal{D}(X)=\int_{0}^{1}\rho^{\alpha}(X)dm(\alpha)$, where $m\in\mathcal{P}_{(0,1]}$.
		\end{enumerate}	
	\end{The}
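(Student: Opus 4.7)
The plan is to mimic the three-step structure used in the proof of Theorem~\ref{the:main}, exploiting the fact that every co-monotone coherent risk measure is in particular coherent, and every co-monotone generalized deviation measure is in particular a generalized deviation, so the results already established transfer directly, with co-monotone additivity added as a separate check.

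For part \textit{(i)}, the forward direction follows from Proposition~\ref{pro:lim}\textit{(ii)}: a co-monotone coherent $\rho+\mathcal{D}$ has Translation Invariance and Monotonicity, hence Limitedness. For the converse, Theorem~\ref{the:main}\textit{(i)} already yields that, under Limitedness, $\rho+\mathcal{D}$ is a coherent risk measure. It remains only to verify Co-monotonic Additivity of the sum, which is immediate: for co-monotone $X,Y$, applying the individual co-monotonic additivity of $\rho$ and $\mathcal{D}$ gives $(\rho+\mathcal{D})(X+Y)=\rho(X)+\rho(Y)+\mathcal{D}(X)+\mathcal{D}(Y)=(\rho+\mathcal{D})(X)+(\rho+\mathcal{D})(Y)$.

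For part \textit{(ii)}, since $\rho+\mathcal{D}$ is coherent by \textit{(i)} and Fatou continuous as a sum of Fatou continuous terms, Theorem~\ref{the:dua}\textit{(1)} supplies a dual representation. The explicit form of $\mathcal{P}_{\rho+\mathcal{D}}$ is then obtained exactly as in the computation in the proof of Theorem~\ref{the:main}\textit{(ii)}: adding the dual representations of $\rho$ and $\mathcal{D}$ and collecting under a single expectation produces densities of the form $\frac{d\mathbb{Q}_\rho}{d\mathbb{P}}+\frac{d\mathbb{Q}_\mathcal{D}}{d\mathbb{P}}-1$, and the verification that these define valid probability measures carries over verbatim. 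The converse reverses the same manipulation to recover individual dual representations of $\rho$ and $\mathcal{D}$, which by Theorem~\ref{the:dua} ensures Fatou continuity of both components.

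For part \textit{(iii)}, by \textit{(i)} the sum is a law-invariant coherent risk measure, and by the results of \cite{Jouini2006} and \cite{Svindland2010} it is automatically Fatou continuous, so Theorem~\ref{the:main}\textit{(iii)} delivers the Kusuoka-type representation $\sup_{m\in\mathcal{M}}\int_0^1 \rho^\alpha(X)\,dm(\alpha)$. The extra ingredient in the co-monotone case, and where the main work lies, is to show that Co-monotonic Additivity of $\rho+\mathcal{D}$ forces $\mathcal{M}$ to reduce to a singleton, which is the classical Kusuoka characterization of law-invariant co-monotone coherent risk measures as spectral (Choquet) integrals against a single $m\in\mathcal{P}_{(0,1]}$; the standard argument picks two co-monotone random variables attaining the supremum at different maximizers in $\mathcal{M}$ and derives a contradiction with co-monotonic additivity via strict convexity of the supremum functional on pairs. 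Conversely, any functional of the form $\int_0^1 \rho^\alpha(X)\,dm(\alpha)$ is manifestly law invariant, coherent, and co-monotone additive (since each $\rho^\alpha$ is), and limited because each $\rho^\alpha$ satisfies $\rho^\alpha(X)\le -\inf X$ and $m$ is a probability; by \textit{(i)} this forces the decomposition $\rho$ and $\mathcal{D}$ themselves to be law invariant, which completes the equivalence. The singleton reduction in \textit{(iii)} is the only nontrivial step; everything else is a transcription of the coherent proof with Co-monotonic Additivity appended.
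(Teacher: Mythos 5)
Your proposal is correct and follows exactly the route the paper intends: the paper omits this proof precisely because it is "quite similar to the coherent case," namely transcribing the argument of Theorem~\ref{the:main} and appending the (immediate) check of Co-monotonic Additivity for the sum, with the reduction of $\mathcal{M}$ to a singleton in part \textit{(iii)} supplied by the classical Kusuoka characterization of law-invariant co-monotone coherent risk measures, which is also what the paper invokes. The only soft spot is your one-line sketch of that singleton reduction (the phrase about ``strict convexity of the supremum functional'' is imprecise; the standard argument uses that equality in $\sup_m[\Phi_m(X)+\Phi_m(Y)]\leq\sup_m\Phi_m(X)+\sup_m\Phi_m(Y)$ for all co-monotone pairs forces a common maximizer), but since you correctly identify and cite the classical result, this does not constitute a gap.
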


	\section{Examples}\label{sec:exa}

	In this section, we provide examples of functionals composed by risk and deviation measures in order to illustrate the importance of Limitedness, since it is central to our results. In practical situations, typically the idea is to consider $\rho+\beta\mathcal{D}$, where $\beta$ assumes the role of some penalty coefficient indicating that the proportion of deviation that must be included. Thus, it works similarly to an aversion term. Note that, if $\mathcal{D}$ is a generalized deviation measure, then so is $\beta\mathcal{D}$ for $\beta>0$. The same is true if $\mathcal{D}$ is convex or co-monotone generalized. It is valid to point out that in this situation the acceptance set is defined as $\mathcal{A}_{\rho+\beta\mathcal{D}}:=\{X : \rho(X)+\beta\mathcal{D}(X)\leq 0\}=\{X : -\frac{\rho(X)}{\mathcal{D}(X)}\geq \beta\}$, which is related to a performance criteria similar to a Sharpe ratio. The sing of minus is due to $\rho$ represent losses. When $\beta=0$, $\mathcal{D}$ lacks the Non-Negativity axiom. Nonetheless, in this case, our composition is trivially the initial risk measure $\rho$. We explore results with a main focus on the class of coherent risk measures, especially dual representations. Representations regarding Convexity, Law Invariance and Co-monotonic Additivity can be obtained in the same spirit as in the previous theorems.
	
	Our first example is the intuitive mean plus (p-norm) standard deviation, directly linked to the variance premium and mean-variance Markowitz portfolio theory. The negatives of mean and standard deviation are canonical examples of coherent risk and generalized deviation measures, respectively. We now define this risk measure.
	
	\begin{Def}
		The mean plus standard deviation is a functional $MSD^{\beta}:L^p\rightarrow\mathbb{R}\cup\{\infty\}$ defined conform:
		\begin{equation*}
			MSD^\beta(X)=-E_{\mathbb{P}}[X]+\beta\lVert X-E_{\mathbb{P}}[X]\rVert_p,\:0\leq\beta\leq 1.
		\end{equation*}
	\end{Def}
	
	This risk measure fulfills Translation Invariance, Convexity, Positive Homogeneity and Law Invariance. However, it does not possess Monotonicity. This fact is due to the lack of Limitedness, since it is easy to obtain $\beta\lVert X-E_{\mathbb{P}}[X]\rVert_p>E_{\mathbb{P}}[X]-\inf X$ for some random variable with skewed $F_X$. Indeed, by considering the whole distribution of $X$ makes this risk measure flawed in its financial meaning because it penalizes profit and loss in the same way. Its tail counterpart, when $X$ is restricted to values below its $\alpha$-quantile $F_X^{-1}(\alpha)$, is proposed and studied by \cite{Furman2006} and inherits its main properties. In order to circumvent such drawbacks, it becomes necessary to consider the mean plus (p-norm) semi-deviation. We give a formal definition.

	\begin{Def}
		The mean plus semi-deviation is a functional $MSD^{\beta}_{-}:L^p\rightarrow\mathbb{R}\cup\{\infty\}$ defined conform:
		\begin{equation*}
			MSD_{-}^{\beta}(X)=-E_{\mathbb{P}}[X]+\beta \lVert (X-E_{\mathbb{P}}[X])^-\rVert_p, 0\leq\beta\leq1.
		\end{equation*}
	\end{Def}
	
	It is clear that semi-deviation is a lower range dominated generalized deviation measure. This risk measure is studied in detail by \cite{Ogryczak1999} and \cite{Fischer2003}. It is well known that this functional is a law invariant coherent risk measure. We now provide an alternative proof based on our setting in order to explicit the role of Limitedness axiom.
	
	\begin{Pro}\label{Pro:MSD}
		The mean plus semi-deviation is a law invariant coherent risk measure with dual set $\mathcal{P}_{MSD^\beta_-}=\left\lbrace\mathbb{Q}\in\mathcal{P} : \frac{d\mathbb{Q}}{d\mathbb{P}}=1+\beta( W-E_{\mathbb{P}}[W]), W\leq0, \lVert W\rVert_q\leq1 \right\rbrace$.
	\end{Pro}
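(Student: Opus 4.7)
The plan is to decompose $MSD_-^\beta = \rho + \mathcal{D}$ with $\rho(X) = -E_{\mathbb{P}}[X]$ and $\mathcal{D}(X) = \beta \|(X - E_{\mathbb{P}}[X])^-\|_p$, so that $\rho$ is the canonical coherent risk measure of the mean with dual $\mathcal{P}_\rho = \{\mathbb{P}\}$ and $\mathcal{D}$ is a generalized deviation measure. The generalized-deviation axioms for $\mathcal{D}$ — Non-negativity, Translation Insensitivity, Sub-additivity (via the triangle inequality for $\|\cdot\|_p$ combined with sublinearity of $Y \mapsto (Y - E_{\mathbb{P}}[Y])^-$), Positive Homogeneity, Fatou Continuity, and Law Invariance — all follow from standard $L^p$-norm properties. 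Law invariance of $MSD_-^\beta$ is then immediate from that of its components.

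The pivotal step is verifying Limitedness of $\rho + \mathcal{D}$. Since $X \geq \inf X$ pointwise, $(X - E_{\mathbb{P}}[X])^- \leq E_{\mathbb{P}}[X] - \inf X$ deterministically, whence $\|(X - E_{\mathbb{P}}[X])^-\|_p \leq E_{\mathbb{P}}[X] - \inf X$. Using $\beta \leq 1$,
\[
MSD_-^\beta(X) = -E_{\mathbb{P}}[X] + \beta \|(X - E_{\mathbb{P}}[X])^-\|_p \leq -E_{\mathbb{P}}[X] + (E_{\mathbb{P}}[X] - \inf X) = -\inf X.
\]
Hence $MSD_-^\beta$ is limited, and Theorem \ref{the:main}(i) yields coherence. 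The same estimate also proves that $\mathcal{D}$ is lower range dominated, which activates the dual representation of Theorem \ref{the:dua}(iii).

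To obtain the dual set, I invoke Theorem \ref{the:main}(ii); Fatou continuity of both components transfers to the sum. Since $\mathcal{P}_\rho = \{\mathbb{P}\}$, the density recipe of Theorem \ref{the:main}(ii) collapses via $d\mathbb{Q}_\rho/d\mathbb{P} = 1$ to $\mathcal{P}_{\rho+\mathcal{D}} = \mathcal{P}_{\mathcal{D}}$, so it suffices to identify the dual of the scaled semi-deviation. I would parameterize candidate densities as $h = 1 + \beta(W - E_{\mathbb{P}}[W])$, which integrates to one automatically; the constraints $W \leq 0$ and $\|W\|_q \leq 1$, together with $\beta \leq 1$, serve to guarantee $h \geq 0$ almost surely. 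The supremum identity $\mathcal{D}(X) = \sup_W E_{\mathbb{P}}[X(1-h)]$ over this family is then obtained via Hölder duality for the $L^p$-norm of the negative part, choosing $W$ concentrated where $X$ is small.

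The main obstacle lies in this last Hölder identification: verifying that the prescribed sign and norm restrictions on $W$ carve out exactly $\mathcal{P}_{\mathcal{D}}$, i.e., that the supremum of $-\beta E_{\mathbb{P}}[(X - E_{\mathbb{P}}[X])W]$ over $\{W \leq 0 : \|W\|_q \leq 1\}$ reproduces $\beta\|(X - E_{\mathbb{P}}[X])^-\|_p$ rather than some mirror dual quantity. Once this is pinned down by standard $L^p$-$L^q$ duality with optimizer supported on $\{X \leq E_{\mathbb{P}}[X]\}$, the explicit form of $\mathcal{P}_{MSD^\beta_-}$ is recovered and the proof is complete.
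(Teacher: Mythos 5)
Your overall strategy coincides with the paper's: decompose $MSD^{\beta}_{-}=\rho+\mathcal{D}$ with $\rho=-E_{\mathbb{P}}[\,\cdot\,]$ and $\mathcal{D}=\beta\lVert(\cdot-E_{\mathbb{P}}[\cdot])^{-}\rVert_{p}$, establish Limitedness from the pointwise bound $(X-E_{\mathbb{P}}[X])^{-}\leq E_{\mathbb{P}}[X]-\inf X$ together with $\lVert\cdot\rVert_{p}\leq\lVert\cdot\rVert_{\infty}$ and $\beta\leq1$, invoke Theorem~\ref{the:main}(i) for coherence and Law Invariance of the components for that of the sum, and then use Theorem~\ref{the:main}(ii) with $\mathcal{P}_{\rho}=\{\mathbb{P}\}$ so that $\mathcal{P}_{\rho+\mathcal{D}}=\mathcal{P}_{\mathcal{D}}$. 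All of that is correct and is exactly the paper's argument.

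The gap is in the one step you yourself flag as the ``main obstacle,'' and it does fail as you have set it up. You parameterize the candidate densities as $h=1+\beta(W-E_{\mathbb{P}}[W])$ with $W\leq0$, $\lVert W\rVert_{q}\leq1$, copying the sign that appears in the statement of the proposition. First, your claim that these constraints force $h\geq0$ is false: take $W=-c\mathbf{1}_{A}$ with $\mathbb{P}(A)$ small and $c$ large subject to $c\,\mathbb{P}(A)^{1/q}\leq1$; on $A$ one gets $h=1-\beta c\left(1-\mathbb{P}(A)\right)<0$. Second, and more seriously, with this sign $E_{\mathbb{P}}[X(1-h)]=-\beta E_{\mathbb{P}}[(X-E_{\mathbb{P}}[X])W]$, and its supremum over $\{W\leq0,\ \lVert W\rVert_{q}\leq1\}$ equals $\beta\lVert(X-E_{\mathbb{P}}[X])^{+}\rVert_{p}$ (the optimizer $-W\geq0$ concentrates where $X>E_{\mathbb{P}}[X]$), not the lower semi-deviation; restricting $W$ to the set $\{X\leq E_{\mathbb{P}}[X]\}$, as you propose, drives that supremum to $0$. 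The correct parameterization is the mirror one, $\frac{d\mathbb{Q}}{d\mathbb{P}}=1+\beta\left(E_{\mathbb{P}}[W]-W\right)$, for which nonnegativity is automatic ($-\beta W\geq0$ and $1+\beta E_{\mathbb{P}}[W]\geq1-\beta\geq0$) and the duality identity $\lVert Y^{-}\rVert_{p}=\sup\{E_{\mathbb{P}}[YW]:W\leq0,\ \lVert W\rVert_{q}\leq1\}$ reproduces $\mathcal{D}$ exactly. This is the form the paper's own proof (and the Remark following Proposition~\ref{pro:LD}) arrives at; the displayed formula in the statement of Proposition~\ref{Pro:MSD} carries a sign typo, which your write-up has inherited rather than corrected.
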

	\begin{proof}
		From the properties of both components, which are law invariant coherent risk and generalized deviation measures respectively, we have from Theorem \ref{the:main} the combination is a coherent risk measure if and only if it is limited. This comes from the fact that $(X-E_{\mathbb{P}}[X])^-\leq E_{\mathbb{P}}[X]-\inf X,\:\forall X\in L^p$. Thus, we have $E_{\mathbb{P}}[X]-\inf X\geq\lVert (X-E_{\mathbb{P}}[X])^-\rVert_{\infty}\geq\lVert (X-E_{\mathbb{P}}[X])^-\rVert_p\geq\beta\lVert (X-E_{\mathbb{P}}[X])^-\rVert_p$. Hence, $MSD_{-}^{\beta}\leq-\inf X$. 
		
		Regarding the structure of $\mathcal{P}_{MSD^\beta}$, one must note that the dual set of $-E_{\mathbb{P}}[X]$ is a singleton, while for the semi-deviation multiplied by $\beta$ it is composed, conform \cite{Rockafellar2006}, by relative densities of the form  $\frac{d\mathbb{Q}}{d\mathbb{P}}=\beta(1+E_{\mathbb{P}}[W]-W)+(1-\beta), W\leq0, \lVert V\rVert_q\leq1$. From Theorem \ref{the:main} we have that the representation is given by $\mathcal{P}_{MSD^\beta_-}=\left\lbrace\mathbb{Q}\in\mathcal{P} : \frac{d\mathbb{Q}}{d\mathbb{P}}=1+\beta( E_{\mathbb{P}}[W]-W), W\leq0, \lVert W\rVert_q\leq1 \right\rbrace$. This concludes the proof.
	\end{proof}
	
	From the previous proposition, we can see the importance of Limitedness for Monotonicity of the mean plus semi-deviation risk measure. This notion of penalization over a risk measure by the deviation of results worst than this value can be extended when the negative expectation is replaced by alternative risk measures. An advantage of this approach is that the agent chooses a risk measure and the deviation is directly generated from it. This is explored when $\rho$ is the ES, by \cite{Righi2016}. Moreover, \cite{Righi2017} explores for other risk measures beyond ES, such as expectiles and entropic ones, calling the approach a loss-deviation for portfolio optimization. Their results point out the advantages of these risk measures, but no theoretical results are presented. We thus present a formal definition and explore theoretical properties. In order to ease notation, we define $\rho^{*}(X)=-\rho(X)$. The minus sign is simply an adjustment to ease the notation.
	
	\begin{Def}
		Let $\rho : L^{p}\rightarrow \mathbb{R}\cup\{\infty\}$ be a risk measure. Then its loss-deviation is a functional $LD^\beta_{\rho} : L^{p}\rightarrow \mathbb{R}\cup\{\infty\}$ defined conform:
		\begin{equation*}
			LD^\beta_{\rho}(X)=\rho(X)+\beta \lVert (X-\rho^{*}(X))^-\rVert_p, 0\leq\beta\leq1.
		\end{equation*}
	\end{Def} 
	
	Despite this very interesting intuitive meaning, the penalization term $\lVert (X-\rho^{*}(X))^-\rVert_p$ is not sub-additive for any convex risk measure, with the exception of the negative expectation. To see this fact, note that $(X+Y-\rho^{*}(X)-\rho^{*}(Y))^-\leq(X+Y-\rho^{*}(X+Y))^-$, with equality if and only if $\rho(X)=-E_{\mathbb{P}}[X]$. Thus, it is not a generalized, even convex, deviation measure. Nonetheless, this penalization term composed with a coherent (convex) risk measure $\rho$ results in a sub-additive (convex) loss-deviation. We now expose the formalization of such facts.
	
	\begin{Pro}\label{pro:LD}
		Let $\rho : L^{p}\rightarrow \mathbb{R}\cup\{\infty\}$ be a coherent (convex) risk measure and $LD^\beta_{\rho} : L^{p}\rightarrow \mathbb{R}\cup\{\infty\}$ its loss-deviation. Then:
		\begin{enumerate}
			\item $LD^\beta_{\rho}$ is a coherent (convex) risk measure. If $\rho$ is law invariant then $LD^\beta_{\rho}$ also is. Moreover, if $\rho$ is co-monotone, then $LD^\beta_{\rho}$ is sub-additive for any co-monotone pair $X,Y\in L^p$.
			\item If $\rho$ is Fatou continuous coherent, then $LD^\beta_{\rho}$ has, for $\mathcal{W}=\left\lbrace W:W\leq0,\lVert W\rVert_q\leq1  \right\rbrace$, dual set
			$\mathcal{P}_{LD^\beta_{\rho}}=\left\lbrace\mathbb{Q}\in\mathcal{P}:\frac{d\mathbb{Q}}{d\mathbb{P}}=\frac{d\mathbb{Q}_\rho}{d\mathbb{P}}(1+\beta E_\mathbb{P}[W])-\beta W ,\frac{d\mathbb{Q}_\rho}{d\mathbb{P}}\in\mathcal{P}_\rho,W\in\mathcal{W}  \right\rbrace$.
		\end{enumerate} 
	\end{Pro}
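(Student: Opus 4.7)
The plan is to prove part (i) by direct verification of the coherent (convex) risk measure axioms for $LD^\beta_\rho$, then derive part (ii) by combining the Hölder-type dual of the $p$-norm with the dual representation of $\rho$. Theorem~\ref{the:main} cannot be applied directly because, as noted in the text just before the proposition, the penalty $\|(X-\rho^*(X))^-\|_p$ fails to be a generalized deviation measure (it is not sub-additive), so one must work directly with the combined functional.

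For part (i), Translation Invariance follows at once since $\rho^*(X+C)=\rho^*(X)+C$ cancels the shift in the penalty, and Positive Homogeneity comes from factoring $\lambda\ge0$ out of both terms. For Sub-additivity, set $A=X-\rho^*(X)$, $B=Y-\rho^*(Y)$ and $c=\rho(X)+\rho(Y)-\rho(X+Y)\ge 0$, so that $X+Y-\rho^*(X+Y)=A+B-c$. The pointwise bound $(z-c)^-\le z^-+c$ for $c\ge 0$, together with the pointwise sub-additivity $(A+B)^-\le A^-+B^-$ and the triangle inequality in $L^p$, yields $\|(X+Y-\rho^*(X+Y))^-\|_p\le \|(X-\rho^*(X))^-\|_p+\|(Y-\rho^*(Y))^-\|_p+c$. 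Multiplying by $\beta$, adding $\rho(X+Y)$, and using $\beta\le 1$ to combine $(1-\beta)\rho(X+Y)\le (1-\beta)(\rho(X)+\rho(Y))$ (legitimate because $\rho$ is sub-additive and $1-\beta\ge 0$) collapses the expression to $LD^\beta_\rho(X)+LD^\beta_\rho(Y)$. The convex case is analogous, using concavity of $\rho^*$ and convexity of $(\cdot)^-$. Limitedness is obtained from the pointwise estimate $(X-\rho^*(X))^-\le -\rho(X)-\inf X$, valid because $\rho(X)\le -\inf X$ by Proposition~\ref{pro:lim}(ii); the $L^p$-norm of a non-negative constant is that constant, so $\beta\le 1$ and $\rho(X)\le -\inf X$ give $LD^\beta_\rho(X)\le -\inf X$. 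Monotonicity then follows from Proposition~\ref{pro:lim}(i). Law Invariance is inherited from $\rho$ together with the distributional dependence of both $\rho^*(X)$ and the $p$-norm of $(X-\rho^*(X))^-$. For co-monotone $X,Y$, sub-additivity of $\rho$ becomes an equality, so $c=0$ and the triangle inequality alone yields sub-additivity of $LD^\beta_\rho$ on co-monotone pairs.

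For part (ii), start from $\|Z^-\|_p=\sup_{W\in\mathcal{W}} E_\mathbb{P}[WZ]$ applied to $Z=X-\rho^*(X)$. Fix $\mathbb{Q}_\rho\in\mathcal{P}_\rho$ and $W\in\mathcal{W}$, and use the algebraic identity $(1+\beta E_\mathbb{P}[W])E_{\mathbb{Q}_\rho}[-X]+\beta E_\mathbb{P}[WX]=E_{\mathbb{Q}_\rho}[-X]+\beta E_\mathbb{P}[W(X-\rho^*(X))]$, which holds because $\rho(X)E_\mathbb{P}[W]+E_\mathbb{P}[WX]=E_\mathbb{P}[W(X-\rho^*(X))]$ (using $\rho(X)=-\rho^*(X)$). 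Setting $\tfrac{d\mathbb{Q}}{d\mathbb{P}}=\tfrac{d\mathbb{Q}_\rho}{d\mathbb{P}}(1+\beta E_\mathbb{P}[W])-\beta W$ the left-hand side becomes $E_\mathbb{Q}[-X]$. One checks $\mathbb{Q}\in\mathcal{P}$: the density integrates to $1+\beta E_\mathbb{P}[W]-\beta E_\mathbb{P}[W]=1$, and it is non-negative because $-\beta W\ge 0$ and $1+\beta E_\mathbb{P}[W]\ge 1-\beta\|W\|_q\ge 0$. Applying $E_{\mathbb{Q}_\rho}[-X]\le\rho(X)$ and $E_\mathbb{P}[W(X-\rho^*(X))]\le\|(X-\rho^*(X))^-\|_p$ in the identity gives $E_\mathbb{Q}[-X]\le LD^\beta_\rho(X)$, so $\sup_{\mathbb{Q}\in\mathcal{P}_{LD^\beta_\rho}} E_\mathbb{Q}[-X]\le LD^\beta_\rho(X)$. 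For the reverse inequality, approximate the two individual suprema by $W$ and $\mathbb{Q}_\rho$ within $\epsilon$ of optimal; since $1+\beta E_\mathbb{P}[W]\ge 0$, plugging these choices into the identity produces an $E_\mathbb{Q}[-X]$ within $O(\epsilon)$ of $\rho(X)+\beta\|(X-\rho^*(X))^-\|_p=LD^\beta_\rho(X)$, and letting $\epsilon\downarrow 0$ finishes the proof.

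The main obstacle is precisely the sub-additivity step: since the penalty alone moves in the wrong direction when $\rho^*(X+Y)$ is replaced by $\rho^*(X)+\rho^*(Y)$, one must trade the excess penalty against the sub-additivity slack of $\rho$, and this trade is clean only when $\beta\le 1$. Everything else—Limitedness, Monotonicity via Proposition~\ref{pro:lim}, and the dual-set identification in part (ii)—is then routine given the interplay already exploited in the proof of Theorem~\ref{the:main}.
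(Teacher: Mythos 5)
Your proof follows essentially the same route as the paper's: the same pointwise trade-off between the excess penalty and the sub-additivity slack of $\rho$ (with $\beta\le 1$ doing the balancing), Limitedness from $(X-\rho^{*}(X))^-\le\rho^{*}(X)-\inf X$, Monotonicity via Proposition~\ref{pro:lim}, and the dual set from $\lVert Z^-\rVert_p=\sup_{W\in\mathcal{W}}E_{\mathbb{P}}[ZW]$ combined with the representation of $\rho$.

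One slip in part (ii): the claimed ``algebraic identity'' $(1+\beta E_{\mathbb{P}}[W])E_{\mathbb{Q}_\rho}[-X]+\beta E_{\mathbb{P}}[WX]=E_{\mathbb{Q}_\rho}[-X]+\beta E_{\mathbb{P}}[W(X-\rho^{*}(X))]$ is not an identity for arbitrary $\mathbb{Q}_\rho\in\mathcal{P}_\rho$; it requires $E_{\mathbb{Q}_\rho}[-X]=\rho(X)$, i.e.\ that $\mathbb{Q}_\rho$ attains the supremum, since the right-hand side involves $\rho(X)E_{\mathbb{P}}[W]$ while the left involves $E_{\mathbb{Q}_\rho}[-X]E_{\mathbb{P}}[W]$. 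The $\le$ direction is nonetheless fine if you instead use the inequality $E_{\mathbb{Q}_\rho}[-X]\le\rho(X)$ together with $1+\beta E_{\mathbb{P}}[W]\ge 0$ to bound $E_{\mathbb{Q}}[-X]\le(1+\beta E_{\mathbb{P}}[W])\rho(X)+\beta E_{\mathbb{P}}[WX]=\rho(X)+\beta E_{\mathbb{P}}[W(X-\rho^{*}(X))]\le LD^{\beta}_{\rho}(X)$, and your $\ge$ direction via near-optimizers is unaffected; so the argument is correct after this one-line repair.
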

	
	\begin{proof}
		When $\beta=0$ all claims are obvious from the assumptions on $\rho$. We thus focus on the case $0<\beta\leq 1$. Regarding (i), Translation Invariance and Positive Homogeneity are easily obtained from the fact that $\rho$ fulfills such properties. Let $\Delta^{X+Y}=\lVert (X+Y-\rho^{*}(X+Y))^-\rVert_p-(\lVert (X-\rho^{*}(X))^-\rVert_p+\lVert (Y-\rho^{*}(Y))^-\rVert_p)$. From the Sub-additivity of $\rho$, we get for any $X,Y\in L^p$ that:
		\begin{align*}
			\Delta^{X+Y}&\leq\lVert (X+Y-\rho^{*}(X+Y))^-\rVert_p-\lVert (X+Y-\rho^{*}(X)-\rho^{*}(Y))^-\rVert_p\\
			&\leq\lVert(X+Y-\rho^{*}(X+Y))^--(X+Y-\rho^{*}(X)-\rho^{*}(Y))^-\rVert_p \\
			&\leq\lVert(X+Y-\rho^{*}(X+Y))^--(X+Y-\rho^{*}(X)-\rho^{*}(Y))^-\rVert_\infty \\
			&=\rho(X)+\rho(Y)-\rho(X+Y)\\
			&\leq\frac{1}{\beta}(\rho(X)+\rho(Y)-\rho(X+Y)).
		\end{align*}  
		Thus, we obtain $LD^\beta_{\rho}(X+Y)=\rho(X+Y)+\beta\lVert (X+Y-\rho^{*}(X+Y))^-\rVert_p\leq\rho(X)+\beta \lVert (X-\rho^{*}(X))^-\rVert_p+\rho(Y)+\beta \lVert (Y-\rho^{*}(Y))^-\rVert_p=LD^\beta_{\rho}(X)+LD^\beta_{\rho}(Y)$, as desired. The first and second inequalities are due to both the p-norm and negative part satisfying Sub-additivity property, while the last one is because $\rho$ fulfills Sub-additivity and $\beta\leq 1$. Moreover, since $\rho^{*}(X+Y)\geq\rho^{*}(X)+\rho^{*}(Y)$, we have that:
		$(X+Y-\rho^{*}(X+Y))^--(X+Y-\rho^{*}(X)-\rho^{*}(Y))^-$ assumes value $0$ if $X+Y\geq\rho^{*}(X+Y)$, $\rho^{*}(X+Y)-\rho^{*}(X)-\rho^{*}(Y)$ if $X+Y\leq\rho^{*}(X)+\rho^{*}(Y)$, and some scalar $C\leq\rho^{*}(X+Y)-\rho^{*}(X)-\rho^{*}(Y)$ otherwise. This explains the equality $\lVert(X+Y-\rho^{*}(X+Y))^--(X+Y-\rho^{*}(X)-\rho^{*}(Y))^-\rVert_\infty =\rho(X)+\rho(Y)-\rho(X+Y)$. When $\rho$ is a convex risk measure, the deduction is quite similar, beginning with $\Delta^{X+Y}_\lambda=\lVert (\lambda X+(1-\lambda)Y-\rho^{*}(\lambda X+(1-\lambda)Y))^-\rVert_p-(\lambda\lVert (X-\rho^{*}(X))^-\rVert_p+(1-\lambda
		)(\lVert (Y-\rho^{*}(Y))^-\rVert_p),\:0\leq\lambda\leq 1$. Thus, $LD^\beta_{\rho}$ fulfills Sub-additivity (Convexity) when $\rho$ does. This fact, together with the Limitedness axiom implies, from Proposition \ref{pro:lim}, that $LD^\beta_{\rho}$ satisfies Monotonicity.  Limitedness comes from the fact that $(X-\rho^{*}(X))^-\leq\rho^{*}(X)-\inf X,\:\forall X\in L^p$. Thus, we have $\rho^{*}(X)-\inf X\geq\lVert (X-\rho^{*}(X))^-\rVert_{\infty}\geq\lVert (X-\rho^{*}(X))^-\rVert_p\geq\beta\lVert (X-\rho^{*}(X))^-\rVert_p$. Hence, $LD^\beta_{\rho}(X)\leq-\inf X$. Hence $LD^\beta_{\rho}$ is a coherent or convex risk measure when $\rho$ lies in these same classes. Moreover, it is direct that the Law Invariance of $\rho$ implies the same axiom for $LD^\beta_{\rho}$ because the p-norm is based on expectation. Moreover, if $\rho$ is co-monotone, we have for co-monotonic $X,Y\in L^p$ that:
		\begin{align*}
			LD^\beta_{\rho}(X+Y)&=\rho(X+Y)+\beta\lVert (X+Y-\rho^{*}(X+Y))^-\rVert_p\\
			&=\rho(X)+\rho(Y)+\beta\lVert (X+Y-\rho^{*}(X)-\rho^{*}(Y))^-\rVert_p\\
			&\leq\rho(X)+\rho(Y)+\beta(\lVert (X-\rho^{*}(X))^-\rVert_p+\lVert (Y-\rho^{*}(Y))^-\rVert_p)\\
			&=LD^\beta_{\rho}(X)+LD^\beta_{\rho}(Y),
		\end{align*} 
		which is Sub-additivity for this case, as claimed.
		
		Concerning (ii), let $\mathcal{W}=\left\lbrace W:W\leq0,\lVert W\rVert_q\leq1  \right\rbrace$. It is well known, see \cite{Pflug2006}, that $\lVert X^-\rVert_p=\sup\limits_{W\in\mathcal{W}}E_\mathbb{P}[XW]$. From that, we can obtain for any $X\in L^p$ that:
		\begin{align*}
			LD^\beta_{\rho}(X)&=\rho(X)+\beta\sup\limits_{W\in\mathcal{W}}E_\mathbb{P}[(X-\rho^{*}(X))W]\\
			&=\beta\sup\limits_{W\in\mathcal{W}}\left\lbrace E_\mathbb{P}[XW]+\rho(X)\left(\frac{1}{\beta}+E_\mathbb{P}[W]\right)\right\rbrace\\
			&=\beta\sup\limits_{W\in\mathcal{W}}\left\lbrace E_\mathbb{P}[XW]+\left( \sup\limits_{\mathbb{Q}\in\mathcal{P}_\rho
			}E_{\mathbb{Q}}[-X]\right) \left(\frac{1}{\beta}+E_\mathbb{P}[W]\right)\right\rbrace\\
			&=\sup\limits_{\mathbb{Q}\in\mathcal{P}_\rho,W\in\mathcal{W}}\left\lbrace E_\mathbb{Q}[-X](1+\beta E_\mathbb{P}[W])+\beta E_{\mathbb{P}}[XW] \right\rbrace\\
			&=\sup\limits_{\mathbb{Q}\in\mathcal{P}_\rho,W\in\mathcal{W}}\left\lbrace E_{\mathbb{P}}\left[-X\left(\frac{d\mathbb{Q}_\rho}{d\mathbb{P}}(1+\beta E_\mathbb{P}[W])-\beta W  \right)  \right],\frac{d\mathbb{Q}_\rho}{d\mathbb{P}}\in\mathcal{P}_\rho\right\rbrace\\
			&=\sup\limits_{\mathbb{Q}\in\mathcal{P}_{LD^\beta_{\rho} }}E_{\mathbb{Q}}[-X],
		\end{align*}
		where
		\begin{equation*}
			\mathcal{P}_{LD^\beta_{\rho}}=\left\lbrace\mathbb{Q}\in\mathcal{P}:\frac{d\mathbb{Q}}{d\mathbb{P}}=\frac{d\mathbb{Q}_\rho}{d\mathbb{P}}(1+\beta E_\mathbb{P}[W])-\beta W ,\frac{d\mathbb{Q}_\rho}{d\mathbb{P}}\in\mathcal{P}_\rho,W\in\mathcal{W}  \right\rbrace.
		\end{equation*}
		In the third equality, we use the assumption that $\rho$ is Fatou continuous and, by Theorem \ref{the:dua}, possesses a dual representation. In this same equality, it is valid to note that $\beta^{-1}>1$ and $E_{\mathbb{P}}[W]\geq-1$, which implies that $\beta^{-1}+E_{\mathbb{P}}[W]\geq0$. It remains to show that $\mathcal{P}_{LD^\beta_{\rho} }$ is composed by valid probability measures, i.e. $\forall\:\mathbb{Q}\in\mathcal{P}_{LD^\beta_\rho}$ it is true that $\frac{d\mathbb{Q}}{d\mathbb{P}}\geq0$, $E_{\mathbb{P}}\left[\frac{d\mathbb{Q}}{d\mathbb{P}} \right]=1 $, and $ \frac{d\mathbb{Q}}{d\mathbb{P}}\in L^q$. In this sense, since $\beta E_\mathbb{P}[W]\geq E_\mathbb{P}[W]\geq -1$, we get that $\frac{d\mathbb{Q}}{d\mathbb{P}}=\frac{d\mathbb{Q}_\rho}{d\mathbb{P}}(1+\beta E_\mathbb{P}[W])-\beta W\geq0,\:\forall\:\mathbb{Q}\in\mathcal{P}_{LD^\beta_{\rho}}$. Moreover, we have that $E_\mathbb{P}\left[\frac{d\mathbb{Q}}{d\mathbb{P}}\right]=E_\mathbb{P}\left[\frac{d\mathbb{Q}_\rho}{d\mathbb{P}}(1+\beta E_\mathbb{P}[W])-\beta W\right]=E_\mathbb{P}\left[\frac{d\mathbb{Q}_\rho}{d\mathbb{P}}\right](1+\beta E_\mathbb{P}[W])-\beta E_\mathbb{P}[W]=1,\:\forall\:\mathbb{Q}\in\mathcal{P}_{LD^\beta_{\rho}}$. Finally, we also have that $\left\| \frac{d\mathbb{Q}}{d\mathbb{P}}\right\| _q=\left\| \frac{d\mathbb{Q}_\rho}{d\mathbb{P}}(1+\beta E_\mathbb{P}[W])-\beta W\right\|_q\leq(1+\beta E_\mathbb{P}[W])\left\| \frac{d\mathbb{Q}_\rho}{d\mathbb{P}}\right\|_q+\beta\left\| W\right\|_q<\infty$. This concludes the proof.
	\end{proof}
	
	\begin{Rem}
		When $\rho(X)=-E[X]$, we obtain the mean plus semi-deviation as a particular case, where relative densities have the form $\frac{d\mathbb{Q}}{d\mathbb{P}}=\frac{d\mathbb{Q}_\rho}{d\mathbb{P}}(1+\beta E_\mathbb{P}[W])-\beta W=1+\beta(E_{\mathbb{P}}[W]-W),\:W\in\mathcal{W}$, the same as in Proposition \ref{Pro:MSD}. 
	\end{Rem}
	
	Despite the fact that this kind of risk measure is not contemplated by our main Theorems \ref{the:main}, \ref{the:con} and \ref{the:com}, its coherence (convexity) is guaranteed by Proposition \ref{pro:lim}. This reinforces the role of Limitedness when one combines risk and deviation measures. If $\rho$ fulfills Law Invariance and Co-monotonic Additivity, then $LD^\beta_{\rho}$ lies in the more flexible class of Natural risk measures proposed by \cite{Kou2013}, which must satisfy Monotonicity, Translation Invariance, Positive Homogeneity, Law Invariance and Co-monotonic Sub-Additivity. There are other examples in the literature of functionals composed by a coherent risk measure and a non-convex deviation that is again a coherent risk measure. This is exactly what happens for the Tail Gini Shortfall, proposed by \cite{Furman2017}, and its extension introduced in \cite{Berkhouch2017}. The idea of such risk measures is to have a composition of the form $\rho+\beta\mathcal{D}$ between ES and a Gini functional restricted to the distribution tail. In both cases, it is a necessary restriction on the range of values for $\beta$. In general, it is possible to 'force' Limitedness over $\rho+\mathcal{D}$ by replacing $\mathcal{D}$ for $\beta\mathcal{D}$, under some restriction on the range of $\beta$, despite the properties of $\mathcal{D}$. We now provide such results in a formal way.
	
	\begin{Pro}\label{Pro:bet}
		Let $\rho : L^{p}\rightarrow \mathbb{R}\cup\{\infty\}$ be limited and $\mathcal{D} : L^{p}\rightarrow \mathbb{R}_{+}\cup\{\infty\}$. Then, $\rho+\beta\mathcal{D}$ fulfills Limitedness if and only if $\beta\leq\inf \left\lbrace \frac{\rho^{*}(X)-\inf X }{\mathcal{D}(X)}:X\in L^p,\mathcal{D}(X)>0\right\rbrace$.
	\end{Pro}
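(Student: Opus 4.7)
The plan is to directly unpack the Limitedness axiom for $\rho+\beta\mathcal{D}$ and solve for the admissible range of $\beta$. By definition, $\rho+\beta\mathcal{D}$ satisfies Limitedness if and only if $\rho(X)+\beta\mathcal{D}(X)\leq -\inf X$ for every $X\in L^p$, which using the convention $\rho^{*}(X)=-\rho(X)$ rearranges to the single inequality $\beta\mathcal{D}(X)\leq \rho^{*}(X)-\inf X$ for every $X\in L^p$. The proof then amounts to recognising that this family of inequalities is equivalent to the announced scalar bound on $\beta$.

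Next I would split into two cases based on the value of $\mathcal{D}(X)$. When $\mathcal{D}(X)=0$, the required inequality collapses to $0\leq \rho^{*}(X)-\inf X$, which is automatic from the hypothesis that $\rho$ itself is limited, since $\rho(X)\leq -\inf X$ is equivalent to $\rho^{*}(X)\geq\inf X$. Hence positions with $\mathcal{D}(X)=0$ impose no restriction and may be discarded from the analysis.

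For the remaining positions, for which $\mathcal{D}(X)>0$, dividing through by $\mathcal{D}(X)$ gives the equivalent inequality $\beta\leq \frac{\rho^{*}(X)-\inf X}{\mathcal{D}(X)}$. Since this has to hold simultaneously for every such $X$, it is equivalent to $\beta$ being at most the infimum of the right-hand side over all such $X$, namely $\beta\leq\inf\left\lbrace \frac{\rho^{*}(X)-\inf X}{\mathcal{D}(X)}:X\in L^p,\mathcal{D}(X)>0\right\rbrace$. Both implications are immediate from this chain of equivalences, yielding the ``if and only if'' conclusion.

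There is no substantive obstacle: the argument is a bookkeeping exercise. The only subtle point is the degenerate case $\mathcal{D}(X)=0$, which requires the standing assumption that $\rho$ be limited in order to guarantee that the inequality still holds trivially and that the infimum defining the bound is well-posed and non-negative.
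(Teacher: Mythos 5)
Your proof is correct and follows essentially the same route as the paper: discard the $\mathcal{D}(X)=0$ positions using the limitedness of $\rho$, then divide the pointwise inequality $\beta\mathcal{D}(X)\leq\rho^{*}(X)-\inf X$ by $\mathcal{D}(X)$ and pass to the infimum. Your presentation as a single chain of equivalences is, if anything, slightly cleaner than the paper's two-directional argument.
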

	
	\begin{proof}
		When $\mathcal{D}(X)=0$, it is straightforward that Limitedness is achieved by the assumption on $\rho$. Thus, we focus on the cases when $\mathcal{D}(X)>0$ composed by those non-constant $X$. Let $K=\inf \left\lbrace \frac{\rho^{*}(X)-\inf X }{\mathcal{D}(X)}:X\in L^p,\mathcal{D}(X)>0\right\rbrace$. For $\beta\leq K$, we thus get the following:
		\begin{align*}
			\rho(X)+\beta\mathcal{D}(X)&\leq\rho(X)+K\mathcal{D}(X)\\
			&\leq\rho(X)+\left(\frac{\rho^{*}(X)-\inf X }{\mathcal{D}(X)} \right) \mathcal{D}(X)\\
			&=-\inf X.
		\end{align*}
		For the converse relation, we now assume that $\rho(X)+\beta\mathcal{D}(X)\leq-\inf X,\:\forall\:X\in L^p$. In this case, we obtain that:
		\begin{align*}
			\beta&\leq\inf \left\lbrace \frac{\rho^{*}(X)-\inf X }{\mathcal{D}(X)}:X\in L^p\right\rbrace\\
			&\leq\inf \left\lbrace \frac{\rho^{*}(X)-\inf X }{\mathcal{D}(X)}:X\in L^p,\mathcal{D}(X)>0\right\rbrace.
		\end{align*}
		This concludes the proof.	
	\end{proof}
	
	\begin{Rem}
		In the proposition, we have not used the typical practical constraint $\beta\geq 0$ because, in this case, we have directly achieved the result since $\rho+\beta\mathcal{D}\leq\rho$. Due to Proposition \ref{pro:lim}, we have that $\rho$ is limited for all frameworks of Theorems \ref{the:main}, \ref{the:con} and \ref{the:com}. Thus, this last result is not restricted. Regarding practical interpretation, if $\rho^{*}$ is a parsimonious risk measure that is far from $\inf X$, then $\beta$ can assume larger values, i.e. more protection from $\mathcal{D}$ can be added without losing Limitedness. The contrary reasoning is also valid.
	\end{Rem}
	
	Even with this constraint for the value of $\beta$, Monotonicity is not necessarily achieved because, under Limitedness, Sub-Additivity or Convexity is demanded by Proposition \ref{pro:lim}. At the same time, it would be nice to obtain the deviation term in our composition directly from the risk measure we choose, as in the loss-deviation approach. Under all these perspectives, we consider, as our next example, deviations defined as risk measures applied over a demeaned financial position. We now define such types of deviation.
	
	\begin{Def}
		Let $\rho:L^p\rightarrow\mathbb{R}\cup\{\infty\}$ be a risk measure. Then the deviation induced by $\rho$ is a functional $\mathcal{D}^\rho:L^p\rightarrow\mathbb{R}\cup\{\infty\}$ defined conform:
		\begin{equation*}
			\mathcal{D}^\rho(X)=\rho(X-E_{\mathbb{P}}[X]).
		\end{equation*}
	\end{Def}
	
	This characterization is explored in \cite{Rockafellar2006} and \cite{Rockafellar2013}, which prove that $\mathcal{D}^\rho$ is indeed a lower range dominated generalized (convex) deviation when $\rho$ is a coherent (convex) risk measure. For the case of coherent $\rho$ it is not hard to realize that $\mathcal{P}_{\mathcal{D}^\rho}=\mathcal{P}_\rho$. It is interesting to consider risk measures strictly larger than the negative expectation in order to have a well-defined situation since $E_\mathbb{P}[X-E_\mathbb{P}[X]]=0,\:\forall\:X\in L^p$. We now investigate the properties of a composition given by $\rho+\beta\mathcal{D}^\rho$.
	
	\begin{Pro}\label{Pro:DRho}
		Let $\rho : L^{p}\rightarrow \mathbb{R}\cup\{\infty\}$ be a coherent (convex) risk measure such as $\rho(X)>-E_\mathbb{P}[X],\:\forall\:X\in L^p$. Then:
		\begin{enumerate}
			\item The composition $\rho+\beta\mathcal{D}^\rho$ is a coherent (convex) risk measure if and only if $0\leq\beta\leq\inf \left\lbrace \frac{\rho^{*}(X)-\inf X }{E_\mathbb{P}(X)-\rho^{*}(X)}:X\in L^p\right\rbrace$. Moreover, if $\rho$ fulfills Law Invariance (Co-monotonic Additivity), then $\rho+\beta\mathcal{D}^\rho$ is law invariant (co-monotone).
			\item If $\rho$ is Fatou continuous coherent and $0\leq\beta\leq\inf \left\lbrace \frac{\rho^{*}(X)-\inf X }{E_\mathbb{P}(X)-\rho^{*}(X)}:X\in L^p\right\rbrace$, then the combination $\rho+\beta\mathcal{D}^\rho$ has dual set $\mathcal{P}_{\rho+\beta\mathcal{D}^\rho}=\left\lbrace\mathbb{Q}\in\mathcal{P}:\frac{d\mathbb{Q}}{d\mathbb{P}}=(1+\beta)\frac{d\mathbb{Q}^\rho}{d\mathbb{P}}-\beta ,\mathbb{Q}^\rho\in\mathcal{P}^\rho\right\rbrace $.
		\end{enumerate}
	\end{Pro}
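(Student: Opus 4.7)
The central observation driving both parts is that Translation Invariance of $\rho$ collapses $\mathcal{D}^\rho$ to a simple closed form, $\mathcal{D}^\rho(X) = \rho(X - E_\mathbb{P}[X]) = \rho(X) + E_\mathbb{P}[X] = E_\mathbb{P}[X] - \rho^{*}(X)$. This is non-negative by the dual bound $\rho(X) \geq -E_\mathbb{P}[X]$ inherent to coherent (convex) measures and strictly positive on non-constant positions by the standing hypothesis $\rho > -E_\mathbb{P}$. An immediate corollary is the identity
\begin{equation*}
(\rho + \beta\mathcal{D}^\rho)(X) = (1+\beta)\rho(X) + \beta E_\mathbb{P}[X],
\end{equation*}
which I would use throughout.

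For part (i), I would first verify that $\mathcal{D}^\rho$ is a lower range dominated generalized (convex) deviation: Non-Negativity and the $\rho$-inherited Translation Insensitivity, Sub-additivity (Convexity), Positive Homogeneity, and Fatou continuity are immediate from the closed form, while Lower Range Dominance follows from the Limitedness of $\rho$ guaranteed by Proposition \ref{pro:lim}(ii). Since $\beta \geq 0$ preserves these properties, $\beta\mathcal{D}^\rho$ is itself a lower range dominated generalized (convex) deviation, placing us in the setting of Theorems \ref{the:main}(i) and \ref{the:con}(i): $\rho + \beta\mathcal{D}^\rho$ is coherent (convex) iff it is limited. Proposition \ref{Pro:bet} then translates Limitedness into the bound $\beta \leq \inf\{(\rho^{*}(X) - \inf X)/\mathcal{D}^\rho(X): \mathcal{D}^\rho(X) > 0\}$, and substituting $\mathcal{D}^\rho(X) = E_\mathbb{P}[X] - \rho^{*}(X)$ yields the stated expression. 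Law Invariance of $\rho$ transfers to $\mathcal{D}^\rho$ because $F_X = F_Y$ forces $E_\mathbb{P}[X] = E_\mathbb{P}[Y]$ and hence $F_{X - E_\mathbb{P}[X]} = F_{Y - E_\mathbb{P}[Y]}$; Co-monotonic Additivity transfers because constant shifts preserve co-monotonicity. Both axioms then pass to the sum.

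For part (ii), starting from the identity and the Fatou dual representation $\rho(X) = \sup_{\mathbb{Q}^\rho \in \mathcal{P}_\rho} E_{\mathbb{Q}^\rho}[-X]$ of Theorem \ref{the:dua}(i), I would absorb $\beta E_\mathbb{P}[X] = -\beta E_\mathbb{P}[-X]$ into the supremum,
\begin{equation*}
(\rho + \beta\mathcal{D}^\rho)(X) = \sup_{\mathbb{Q}^\rho \in \mathcal{P}_\rho} E_\mathbb{P}\left[-X\left((1+\beta)\tfrac{d\mathbb{Q}^\rho}{d\mathbb{P}} - \beta\right)\right],
\end{equation*}
which identifies the candidate densities. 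Because part (i) already ensures $\rho + \beta\mathcal{D}^\rho$ is Fatou continuous coherent, Theorem \ref{the:dua}(i) supplies a canonical dual set; I would then prove the claimed set equals this canonical dual by two inclusions. For $(\supseteq)$, any $\mathbb{Q}$ of the claimed form satisfies $E_\mathbb{Q}[-X] = (1+\beta)E_{\mathbb{Q}^\rho}[-X] + \beta E_\mathbb{P}[X] \leq (\rho + \beta\mathcal{D}^\rho)(X)$. For $(\subseteq)$, given $\mathbb{Q}$ in the canonical dual, set $d\mathbb{Q}^\rho/d\mathbb{P} := (1+\beta)^{-1}(d\mathbb{Q}/d\mathbb{P} + \beta)$; non-negativity, unit mass, and $L^q$ membership are routine, while the requirement $\rho(X) \geq E_{\mathbb{Q}^\rho}[-X]$ rearranges exactly into $(\rho + \beta\mathcal{D}^\rho)(X) \geq E_\mathbb{Q}[-X]$, i.e.\ the defining property of $\mathbb{Q}$ in the canonical dual.

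The main obstacle I anticipate is conceptual rather than computational. In the parametrization $d\mathbb{Q}/d\mathbb{P} = (1+\beta)d\mathbb{Q}^\rho/d\mathbb{P} - \beta$, non-negativity is not automatic for arbitrary $\mathbb{Q}^\rho \in \mathcal{P}_\rho$ since it demands $d\mathbb{Q}^\rho/d\mathbb{P} \geq \beta/(1+\beta)$. Hence the qualifier $\mathbb{Q} \in \mathcal{P}$ genuinely trims the parametrization, and one cannot simply equate the formula-level supremum with a supremum over a set of bona fide probability measures. The two-sided containment against the canonical dual set, made available only because part (i) secures Fatou continuous coherence of the sum, is precisely what sidesteps this issue and produces the claimed representation.
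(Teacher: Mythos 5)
Your proof is correct. Part (i) follows essentially the paper's own route: the paper likewise reduces Monotonicity to Limitedness via Proposition \ref{pro:lim} (Sub-additivity/Convexity being inherited termwise) and then invokes Proposition \ref{Pro:bet}, substituting $\mathcal{D}^\rho(X)=\rho(X)+E_\mathbb{P}[X]=E_\mathbb{P}[X]-\rho^{*}(X)$ to get the stated threshold; your explicit verification that $\mathcal{D}^\rho$ is a lower range dominated generalized (convex) deviation is only a slightly more self-contained packaging of what the paper delegates to the cited Rockafellar--Uryasev results.

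For part (ii) your route is genuinely different and, in fact, tighter than the paper's. The paper obtains the dual set by feeding the dual set of $\beta\mathcal{D}^\rho$ into Theorem \ref{the:main}(ii) and concluding ``after some simple manipulation''; this leaves two things implicit that your argument handles explicitly. First, Theorem \ref{the:main}(ii) lets $\mathbb{Q}_\rho$ and $\mathbb{Q}_{\mathcal{D}}$ range \emph{independently} over $\mathcal{P}_\rho$, whereas the claimed set is the diagonal $\mathbb{Q}_\rho=\mathbb{Q}_{\mathcal{D}}$; your closed form $(\rho+\beta\mathcal{D}^\rho)(X)=(1+\beta)\rho(X)+\beta E_\mathbb{P}[X]$ produces the diagonal parametrization directly and sidesteps the need to argue that restricting to the diagonal does not lower the supremum. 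Second, as you correctly flag, the densities $(1+\beta)\tfrac{d\mathbb{Q}^\rho}{d\mathbb{P}}-\beta$ need not be non-negative for arbitrary $\mathbb{Q}^\rho\in\mathcal{P}_\rho$, so the qualifier $\mathbb{Q}\in\mathcal{P}$ genuinely trims the set; your two-sided containment against the canonical dual set of Theorem \ref{the:dua}(i) (made available by the coherence and Fatou continuity secured in part (i)) is exactly what justifies that the sup over only the \emph{bona fide} probability measures still recovers $\rho+\beta\mathcal{D}^\rho$. Incidentally, the paper's intermediate formula for the dual densities of $\beta\mathcal{D}^\rho$ carries a sign slip ($-(1-\beta)$ where $+(1-\beta)$ is needed to land on $(1+\beta)\tfrac{d\mathbb{Q}^\rho}{d\mathbb{P}}-\beta$); your derivation avoids this entirely. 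What the paper's approach buys is brevity and uniformity with the general machinery of Theorem \ref{the:main}; what yours buys is a verification that the stated set is literally the canonical dual set, not merely a parametrized family whose supremum happens to agree.
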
 
	
	\begin{proof}
		For (i), Translation Invariance, Sub-Additivity, Convexity, Positive Homogeneity, Law Invariance and Co-monotonic Additivity are direct when $\rho$ possess these properties and $\beta\geq0$. Regarding Monotonicity, from Proposition \ref{pro:lim} we just need Limitedness because Sub-additivity (Convexity) is present. Moreover, by Proposition \ref{Pro:bet} we have that $\rho+\beta\mathcal{D}^\rho$ is limited if and only if $\beta\leq\inf \left\lbrace \frac{\rho^{*}(X)-\inf X }{\rho(X-E_\mathbb{P}[X])}:X\in L^p,\rho(X-E_\mathbb{P}[X])>0\right\rbrace=\left\lbrace \frac{\rho^{*}(X)-\inf X }{E_\mathbb{P}[X]-\rho^*(X)}:X\in L^p\right\rbrace$, as claimed.
		
		Concerning (ii), since $\rho$ is Fatou continuous coherent, it has a representation under dual set $\mathcal{P}_\rho$. The same is true for $\beta\mathcal{D}^\rho$ under relative densities $\beta\frac{d\mathbb{Q}_\rho}{d\mathbb{P}}-(1-\beta)$. The role of $\beta$ is according to \cite{Rockafellar2006}. Moreover, from (i), $\rho+\beta\mathcal{D}^\rho$ is also coherent for $0\leq\beta\leq\inf \left\lbrace \frac{\rho^{*}(X)-\inf X }{E_\mathbb{P}(X)-\rho^{*}(X)}:X\in L^p\right\rbrace$. Its Fatou continuity is direct from that of $\rho$. From Theorem \ref{the:main} its dual set is $\mathcal{P}_{\rho+\beta\mathcal{D}^\rho}=\left\lbrace\mathbb{Q}\in\mathcal{P}:\frac{d\mathbb{Q}}{d\mathbb{P}}=\frac{d\mathbb{Q}_\rho}{d\mathbb{P}} +\beta\frac{d\mathbb{Q}_\rho}{d\mathbb{P}}-(1-\beta)-1 ,\mathbb{Q}_\rho\in\mathcal{P}_\rho\right\rbrace $. After some simple manipulation, the claim is achieved. This concludes the proof.
	\end{proof}
	
	\section{Conclusion}\label{sec:conc}
	In this paper, we present a composition of risk and deviation measures, which considers the concepts of loss and variability, in order to keep desired theoretical properties. Most studies are so far only concerned with specific examples, while we present a general approach. Our results are based on the proposed Limitedness axiom, which indicates that the composition value must not be over a certain limit -- the supremum of possible losses. In this context, we prove that this composition is a coherent, convex or co-monotone risk measure, conforming to properties of the two components. In a second contribution, we provide results about specific examples of known and new risk measures constructed under this framework. In such results, the importance of our approach becomes clear, especially the role of Limitedness axiom. 
	
	\bibliographystyle{elsarticle-harv}
	\bibliography{rep}

\begin{thebibliography}{33}
\expandafter\ifx\csname natexlab\endcsname\relax\def\natexlab#1{#1}\fi
\providecommand{\url}[1]{\texttt{#1}}
\providecommand{\href}[2]{#2}
\providecommand{\path}[1]{#1}
\providecommand{\DOIprefix}{doi:}
\providecommand{\ArXivprefix}{arXiv:}
\providecommand{\URLprefix}{URL: }
\providecommand{\Pubmedprefix}{pmid:}
\providecommand{\doi}[1]{\href{http://dx.doi.org/#1}{\path{#1}}}
\providecommand{\Pubmed}[1]{\href{pmid:#1}{\path{#1}}}
\providecommand{\bibinfo}[2]{#2}
\ifx\xfnm\relax \def\xfnm[#1]{\unskip,\space#1}\fi
%Type = Article
\bibitem[{Acerbi(2002)}]{Acerbi2002a}
\bibinfo{author}{Acerbi, C.}, \bibinfo{year}{2002}.
\newblock \bibinfo{title}{{Spectral measures of risk: A coherent representation
  of subjective risk aversion}}.
\newblock \bibinfo{journal}{Journal of Banking {\&} Finance}
  \bibinfo{volume}{26}, \bibinfo{pages}{1505--1518}.
%Type = Article
\bibitem[{Ang et~al.(2018)Ang, Sun and Yao}]{Ang2018}
\bibinfo{author}{Ang, M.}, \bibinfo{author}{Sun, J.}, \bibinfo{author}{Yao,
  Q.}, \bibinfo{year}{2018}.
\newblock \bibinfo{title}{On the dual representation of coherent risk
  measures}.
\newblock \bibinfo{journal}{Annals of Operations Research}
  \bibinfo{volume}{262}, \bibinfo{pages}{29--46}.
%Type = Article
\bibitem[{Artzner et~al.(1999)Artzner, Delbaen, Eber and Heath}]{Artzner1999}
\bibinfo{author}{Artzner, P.}, \bibinfo{author}{Delbaen, F.},
  \bibinfo{author}{Eber, J.}, \bibinfo{author}{Heath, D.},
  \bibinfo{year}{1999}.
\newblock \bibinfo{title}{{Coherent measures of risk}}.
\newblock \bibinfo{journal}{Mathematical Finance} \bibinfo{volume}{9},
  \bibinfo{pages}{203--228}.
%Type = Article
\bibitem[{B{\"{a}}uerle and M{\"{u}}ller(2006)}]{Bauerle2006}
\bibinfo{author}{B{\"{a}}uerle, N.}, \bibinfo{author}{M{\"{u}}ller, A.},
  \bibinfo{year}{2006}.
\newblock \bibinfo{title}{{Stochastic orders and risk measures: Consistency and
  bounds}}.
\newblock \bibinfo{journal}{Insurance: Mathematics and Economics}
  \bibinfo{volume}{38}, \bibinfo{pages}{132--148}.
%Type = Article
\bibitem[{Berkhouch et~al.(2017)Berkhouch, Lakhnati and Righi}]{Berkhouch2017}
\bibinfo{author}{Berkhouch, M.}, \bibinfo{author}{Lakhnati, G.},
  \bibinfo{author}{Righi, M.}, \bibinfo{year}{2017}.
\newblock \bibinfo{title}{Extended gini-type measures of risk and variability}.
\newblock \bibinfo{journal}{ArXiv} \bibinfo{volume}{working paper}.
%Type = Article
\bibitem[{Chen and Wang(2008)}]{Chen2008}
\bibinfo{author}{Chen, Z.}, \bibinfo{author}{Wang, Y.}, \bibinfo{year}{2008}.
\newblock \bibinfo{title}{{Two-sided coherent risk measures and their
  application in realistic portfolio optimization}}.
\newblock \bibinfo{journal}{Journal of Banking {\&} Finance}
  \bibinfo{volume}{32}, \bibinfo{pages}{2667--2673}.
%Type = Incollection
\bibitem[{Delbaen(2002)}]{Delbaen2002}
\bibinfo{author}{Delbaen, F.}, \bibinfo{year}{2002}.
\newblock \bibinfo{title}{Coherent risk measures on general probability
  spaces}, in: \bibinfo{editor}{Sandmann, K.},
  \bibinfo{editor}{Sch{\"o}nbucher, P.J.} (Eds.), \bibinfo{booktitle}{Advances
  in Finance and Stochastics: Essays in Honour of Dieter Sondermann}.
  \bibinfo{publisher}{Springer Berlin Heidelberg}, pp. \bibinfo{pages}{1--37}.
%Type = Article
\bibitem[{Dentcheva et~al.(2010)Dentcheva, Penev and
  Ruszczy{\'{n}}ski}]{Dentcheva2010}
\bibinfo{author}{Dentcheva, D.}, \bibinfo{author}{Penev, S.},
  \bibinfo{author}{Ruszczy{\'{n}}ski, A.}, \bibinfo{year}{2010}.
\newblock \bibinfo{title}{Kusuoka representation of higher order dual risk
  measures}.
\newblock \bibinfo{journal}{Annals of Operations Research}
  \bibinfo{volume}{181}, \bibinfo{pages}{325--335}.
%Type = Article
\bibitem[{Filipovi{\'{c}} and Kupper(2007)}]{Filipovic2007}
\bibinfo{author}{Filipovi{\'{c}}, D.}, \bibinfo{author}{Kupper, M.},
  \bibinfo{year}{2007}.
\newblock \bibinfo{title}{{Monotone and cash-invariant convex functions and
  hulls}}.
\newblock \bibinfo{journal}{Insurance: Mathematics and Economics}
  \bibinfo{volume}{41}, \bibinfo{pages}{1--16}.
%Type = Article
\bibitem[{Fischer(2003)}]{Fischer2003}
\bibinfo{author}{Fischer, T.}, \bibinfo{year}{2003}.
\newblock \bibinfo{title}{{Risk capital allocation by coherent risk measures
  based on one-sided moments}}.
\newblock \bibinfo{journal}{insurance: Mathematics and Economics}
  \bibinfo{volume}{32}, \bibinfo{pages}{135--146}.
%Type = Article
\bibitem[{F{\"{o}}llmer and Schied(2002)}]{Follmer2002}
\bibinfo{author}{F{\"{o}}llmer, H.}, \bibinfo{author}{Schied, A.},
  \bibinfo{year}{2002}.
\newblock \bibinfo{title}{{Convex measures of risk and trading constraints}}.
\newblock \bibinfo{journal}{Finance and Stochastics} \bibinfo{volume}{6},
  \bibinfo{pages}{429--447}.
%Type = Article
\bibitem[{F\"{o}llmer and Weber(2015)}]{Follmer2015}
\bibinfo{author}{F\"{o}llmer, H.}, \bibinfo{author}{Weber, S.},
  \bibinfo{year}{2015}.
\newblock \bibinfo{title}{The axiomatic approach to risk measures for capital
  determination}.
\newblock \bibinfo{journal}{Annual Review of Financial Economics}
  \bibinfo{volume}{7}, \bibinfo{pages}{301--337}.
%Type = Article
\bibitem[{Frittelli and {Rosazza Gianin}(2002)}]{Frittelli2002}
\bibinfo{author}{Frittelli, M.}, \bibinfo{author}{{Rosazza Gianin}, E.},
  \bibinfo{year}{2002}.
\newblock \bibinfo{title}{{Putting order in risk measures}}.
\newblock \bibinfo{journal}{Journal of Banking {\&} Finance}
  \bibinfo{volume}{26}, \bibinfo{pages}{1473--1486}.
%Type = Article
\bibitem[{Frittelli and {Rosazza Gianin}(2005)}]{Frittelli2005}
\bibinfo{author}{Frittelli, M.}, \bibinfo{author}{{Rosazza Gianin}, E.},
  \bibinfo{year}{2005}.
\newblock \bibinfo{title}{{Law invariant convex risk measures}}.
\newblock \bibinfo{journal}{Advances in Mathematical Economics}
  \bibinfo{volume}{7}, \bibinfo{pages}{33--46}.
%Type = Article
\bibitem[{Furman and Landsman(2006)}]{Furman2006}
\bibinfo{author}{Furman, E.}, \bibinfo{author}{Landsman, Z.},
  \bibinfo{year}{2006}.
\newblock \bibinfo{title}{{Tail Variance Premium with Applications for
  Elliptical Portfolio of Risks}}.
\newblock \bibinfo{journal}{ASTIN Bulletin} \bibinfo{volume}{36},
  \bibinfo{pages}{433--462}.
%Type = Article
\bibitem[{Furman et~al.(2017)Furman, Wang and Zitikis}]{Furman2017}
\bibinfo{author}{Furman, E.}, \bibinfo{author}{Wang, R.},
  \bibinfo{author}{Zitikis, R.}, \bibinfo{year}{2017}.
\newblock \bibinfo{title}{{Gini-type measures of risk and variability: Gini
  shortfall, capital allocations, and heavy-tailed risks}}.
\newblock \bibinfo{journal}{Journal of Banking {\&} Finance}
  \bibinfo{volume}{83}, \bibinfo{pages}{70--84}.
%Type = Article
\bibitem[{Grechuk et~al.(2009)Grechuk, Molyboha and Zabarankin}]{Grechuk2009}
\bibinfo{author}{Grechuk, B.}, \bibinfo{author}{Molyboha, A.},
  \bibinfo{author}{Zabarankin, M.}, \bibinfo{year}{2009}.
\newblock \bibinfo{title}{{Maximum Entropy Principle with General Deviation
  Measures}}.
\newblock \bibinfo{journal}{Mathematics of Operations Research}
  \bibinfo{volume}{34}, \bibinfo{pages}{445--467}.
%Type = Article
\bibitem[{Inoue(2003)}]{Inoue2003}
\bibinfo{author}{Inoue, A.}, \bibinfo{year}{2003}.
\newblock \bibinfo{title}{{On the worst conditional expectation}}.
\newblock \bibinfo{journal}{Journal of Mathematical Analysis and Applications}
  \bibinfo{volume}{286}, \bibinfo{pages}{237--247}.
%Type = Article
\bibitem[{Jouini et~al.(2006)Jouini, Schachermayer and Touzi}]{Jouini2006}
\bibinfo{author}{Jouini, E.}, \bibinfo{author}{Schachermayer, W.},
  \bibinfo{author}{Touzi, N.}, \bibinfo{year}{2006}.
\newblock \bibinfo{title}{{Law invariant risk measures have the Fatou
  property}}.
\newblock \bibinfo{journal}{Advances in Mathematical Economics}
  \bibinfo{volume}{9}, \bibinfo{pages}{49--71}.
%Type = Article
\bibitem[{Kaina and R{\"u}schendorf(2009)}]{Kaina2009}
\bibinfo{author}{Kaina, M.}, \bibinfo{author}{R{\"u}schendorf, L.},
  \bibinfo{year}{2009}.
\newblock \bibinfo{title}{On convex risk measures on lp-spaces}.
\newblock \bibinfo{journal}{Mathematical Methods of Operations Research}
  \bibinfo{volume}{69}, \bibinfo{pages}{475--495}.
%Type = Article
\bibitem[{Kou et~al.(2013)Kou, Peng and Heyde}]{Kou2013}
\bibinfo{author}{Kou, S.}, \bibinfo{author}{Peng, X.}, \bibinfo{author}{Heyde,
  C.C.}, \bibinfo{year}{2013}.
\newblock \bibinfo{title}{External risk measures and basel accords}.
\newblock \bibinfo{journal}{Mathematics of Operations Research}
  \bibinfo{volume}{38}, \bibinfo{pages}{393--417}.
%Type = Article
\bibitem[{Krokhmal(2007)}]{Krokhmal2007}
\bibinfo{author}{Krokhmal, P.}, \bibinfo{year}{2007}.
\newblock \bibinfo{title}{{Higher moment coherent risk measures}}.
\newblock \bibinfo{journal}{Quantitative Finance} \bibinfo{volume}{7},
  \bibinfo{pages}{373--387}.
%Type = Article
\bibitem[{Kusuoka(2001)}]{Kusuoka2001}
\bibinfo{author}{Kusuoka, S.}, \bibinfo{year}{2001}.
\newblock \bibinfo{title}{{On law invariant coherent risk measures}}.
\newblock \bibinfo{journal}{Advances in Mathematical Economics}
  \bibinfo{volume}{3}, \bibinfo{pages}{158--168}.
%Type = Article
\bibitem[{Noyan and Rudolf(2015)}]{Noyan2015}
\bibinfo{author}{Noyan, N.}, \bibinfo{author}{Rudolf, G.},
  \bibinfo{year}{2015}.
\newblock \bibinfo{title}{Kusuoka representations of coherent risk measures in
  general probability spaces}.
\newblock \bibinfo{journal}{Annals of Operations Research}
  \bibinfo{volume}{229}, \bibinfo{pages}{591--605}.
%Type = Article
\bibitem[{Ogryczak and Ruszczy\'{n}ski(1999)}]{Ogryczak1999}
\bibinfo{author}{Ogryczak, W.}, \bibinfo{author}{Ruszczy\'{n}ski, A.},
  \bibinfo{year}{1999}.
\newblock \bibinfo{title}{{From stochastic dominance to mean-risk models:
  Semideviations as risk measures}}.
\newblock \bibinfo{journal}{European Journal of Operational Research}
  \bibinfo{volume}{116}, \bibinfo{pages}{33--50}.
%Type = Article
\bibitem[{Pflug(2006)}]{Pflug2006}
\bibinfo{author}{Pflug, G.}, \bibinfo{year}{2006}.
\newblock \bibinfo{title}{{Subdifferential representations of risk measures}}.
\newblock \bibinfo{journal}{Mathematical Programming} \bibinfo{volume}{108},
  \bibinfo{pages}{339--354}.
%Type = Article
\bibitem[{Righi and Borenstein(2017)}]{Righi2017}
\bibinfo{author}{Righi, M.}, \bibinfo{author}{Borenstein, D.},
  \bibinfo{year}{2017}.
\newblock \bibinfo{title}{A simulation comparison of risk measures for
  portfolio optimization}.
\newblock \bibinfo{journal}{Finance Research Letters} \bibinfo{volume}{in
  press}.
%Type = Article
\bibitem[{Righi and Ceretta(2015)}]{Righi2015}
\bibinfo{author}{Righi, M.}, \bibinfo{author}{Ceretta, P.},
  \bibinfo{year}{2015}.
\newblock \bibinfo{title}{A comparison of expected shortfall estimation
  models}.
\newblock \bibinfo{journal}{Journal of Economics and Business}
  \bibinfo{volume}{78}, \bibinfo{pages}{14 -- 47}.
%Type = Article
\bibitem[{Righi and Ceretta(2016)}]{Righi2016}
\bibinfo{author}{Righi, M.}, \bibinfo{author}{Ceretta, P.},
  \bibinfo{year}{2016}.
\newblock \bibinfo{title}{{Shortfall Deviation Risk: an alternative to risk
  measurement}}.
\newblock \bibinfo{journal}{Journal of Risk} \bibinfo{volume}{19},
  \bibinfo{pages}{81--116}.
%Type = Article
\bibitem[{Rockafellar and Uryasev(2013)}]{Rockafellar2013}
\bibinfo{author}{Rockafellar, R.}, \bibinfo{author}{Uryasev, S.},
  \bibinfo{year}{2013}.
\newblock \bibinfo{title}{{The fundamental risk quadrangle in risk management,
  optimization and statistical estimation}}.
\newblock \bibinfo{journal}{Surveys in Operations Research and Management
  Science} \bibinfo{volume}{18}, \bibinfo{pages}{33--53}.
%Type = Article
\bibitem[{Rockafellar et~al.(2006)Rockafellar, Uryasev and
  Zabarankin}]{Rockafellar2006}
\bibinfo{author}{Rockafellar, R.}, \bibinfo{author}{Uryasev, S.},
  \bibinfo{author}{Zabarankin, M.}, \bibinfo{year}{2006}.
\newblock \bibinfo{title}{{Generalized deviations in risk analysis}}.
\newblock \bibinfo{journal}{Finance and Stochastics} \bibinfo{volume}{10},
  \bibinfo{pages}{51--74}.
%Type = Article
\bibitem[{Svindland(2010)}]{Svindland2010}
\bibinfo{author}{Svindland, G.}, \bibinfo{year}{2010}.
\newblock \bibinfo{title}{{Continuity properties of law-invariant
  (quasi-)convex risk functions on $L^{\infty}$}}.
\newblock \bibinfo{journal}{Mathematics and Financial Economics}
  \bibinfo{volume}{3}, \bibinfo{pages}{39--43}.
%Type = Article
\bibitem[{Wang et~al.(2017)Wang, Wei and Willmot}]{Wang2017}
\bibinfo{author}{Wang, R.}, \bibinfo{author}{Wei, Y.},
  \bibinfo{author}{Willmot, G.}, \bibinfo{year}{2017}.
\newblock \bibinfo{title}{Characterization, robustness and aggregation of
  signed choquet integrals}.
\newblock \bibinfo{journal}{SSRN} \bibinfo{volume}{working paper}.

\end{thebibliography}
\end{document}